\newcommand{\AP}{\textit{AP} }
\renewcommand{\next}{\LTLcircle}
\newcommand{\nexts}[1]{\next^{#1}}
\newcommand{\until}{{\cal U }}
\newcommand{\ceil}[1]{\lceil #1 \rceil}
\newcommand{\set}[1]{\{#1\}}
\newcommand{\nats}{\mathbb{N}}
\newcommand{\dotcup}{\ensuremath{\mathaccent\cdot\cup}}
\newcommand{\ltl}{\textsc{LTL}}
\newcommand{\class}{\mathcal{C}}
\newcommand{\pspace}{\textsc{Pspace}}
\newcommand{\exptime}{\textsc{Exptime}}
\newcommand{\expspace}{\textsc{Expspace}}
\newcommand{\twoexptime}{\textsc{2Exptime}}
\newcommand{\old}{\hspace{-0.04cm}_o}
\newcommand{\sharpp}{\#\textsc{P}}
\newcommand{\sharppspace}{\#\old\textsc{Pspace}}
\newcommand{\sharpexptime}{\#\old\textsc{Exptime}}
\newcommand{\sharpnexptime}{\#\old\textsc{NExptime}}
\newcommand{\sharpexpspace}{\#\old\textsc{Expspace}}
\newcommand{\sharptwoexptime}{\#\old\textsc{2Exptime}}
\newcommand{\sharppspaceprime}{\#\textsc{Pspace}}
\newcommand{\sharpexptimeprime}{\#\textsc{Exptime}}
\newcommand{\sharpexpspaceprime}{\#\textsc{Expspace}}
\newcommand{\sharptwoexptimeprime}{\#\textsc{2Exptime}}
\newcommand{\fpspace}{\textsc{FPspace}}
\begin{document}

\title{The Complexity of Counting Models of Linear-time Temporal Logic\thanks{This work was partially supported by the German Research Foundation (DFG) as
part of SFB/TR 14 AVACS and by the Deutsche Telekom Foundation.}
\fnmsep \thanks{A short version appears in the Proceedings of FSTTCS 2014.}
}
\author{Hazem Torfah \and Martin Zimmermann}
\institute{Reactive Systems Group,  Saarland University,  66123 Saarbr\"ucken, Germany \\
\email{$\{$torfah, zimmermann$\}$@react.uni-saarland.de}
}
\titlerunning{The Complexity of Counting Models of LTL}
\authorrunning{H. Torfah and M. Zimmermann}

\maketitle

\begin{abstract}
We determine the complexity of counting models of bounded size of specifications expressed in Linear-time Temporal Logic. 
Counting word-models is \#P-complete, if the bound is given in unary, and as hard as counting accepting runs of nondeterministic polynomial space Turing machines, if the bound is given in binary. Counting tree-models is as hard as counting accepting runs of nondeterministic exponential time Turing machines, if the bound is given in unary. For a binary encoding of the bound, the problem is at least as hard as counting accepting runs of nondeterministic exponential space Turing machines. On the other hand, it is not harder than counting accepting runs of nondeterministic doubly-exponential time Turing machines. 
\end{abstract}

\section{Introduction}

\emph{Model counting}, the problem of computing the \emph{number of models} of a logical formula, generalizes the satisfiability problem and has diverse applications: many probabilistic inference problems, such as Bayesian net reasoning \cite{Littman00stochasticboolean}, and planning problems, such as computing the robustness of plans in incomplete domains \cite{DBLP:conf/aaai/MorwoodB12}, can be formulated as model counting problems of propositional logic. Model counting for Linear-time Temporal Logic (LTL) has been recently introduced in \cite{DBLP:conf/lata/FinkbeinerT14}. LTL is the most commonly used specification logic for reactive systems \cite{Pnueli:1977:TLP:1382431.1382534} and the standard input language for model checking \cite{DBLP:series/faia/Biere09,Burch92symbolicmodel} and synthesis tools \cite{BGHKK12,DBLP:conf/cav/BohyBFJR12,DBLP:conf/tacas/Ehlers11}. LTL model counting asks for computing the number of transition systems that satisfy a given LTL formula. As such a formula has either zero or infinitely many models one considers models of \emph{bounded} size: for a formula $\varphi$ and a bound $k$, the problem is to count the number of models of $\varphi$ of size $k$. This is motivated by applications like bounded model checking \cite{DBLP:series/faia/Biere09} and bounded synthesis \cite{Finkbeiner+Schewe/2013/Bounded}, where one looks for short error paths and small implementations, respectively, by iteratively increasing a bound on the size of the model. Just like propositional model counting generalizes satisfiability, 
by considering two types of bounded models, namely, \emph{word-models} (of length~$k$) and \emph{tree-models} (of height~$k$), the authors of \cite{DBLP:conf/lata/FinkbeinerT14} introduced quantitative extensions of model checking and synthesis.

Word-models are ultimately periodic words of the form $u.v^\omega$ of bounded length $|u.v|$, which are used to model computations of a system. Counting word-models can be used in \emph{model checking} to determine not only the existence of computations that violate the specification, but also the \emph{number} of such \emph{violations}. To this end, one turns the model checking problem into an LTL satisfiability problem by encoding the transition system and the negation of the specification into a single LTL formula. Its models represent erroneous computations of the  system, i.e., counting them gives a quantitative notion of satisfaction. 

Tree-models are finite trees (of fixed out-degree)  of bounded height with back-edges at the leaves, i.e., tree-models can be exponentially-sized in the bound. They are used to describe implementations of the input-output behavior of reactive systems (see, e.g.,~\cite{Finkbeiner+Schewe/2013/Bounded}), namely the edges of a tree-model represent the input behavior of the environment and the nodes represent the corresponding output behavior of the system. In \emph{synthesis}, counting tree-models can be used to determine not only the existence of an implementation that satisfies the specification, but also the \emph{number} of such \emph{implementations}. This number is a helpful metric to understand how much room for implementation choices is left by a given specification, and to estimate the impact of new requirements on the remaining design space. 

For \emph{safety} LTL specifications~\cite{Sis94}, algorithms solving the word- and the tree-model counting problem were presented in \cite{DBLP:conf/lata/FinkbeinerT14}. The running time of the algorithms is linear in the bound and doubly-exponential respectively triply-exponential in the length of the formula. The high complexity in the formula is, however, not a major concern in practice, since specifications are typically small while models are large (cf.\ the state-space explosion problem).

Here, we complement these algorithms by analyzing the computational complexity of the model counting problems for \emph{full} LTL by placing the problems into counting complexity classes. 
These classes are based on counting accepting runs of nondeterministic Turing machines. In his seminal paper on the complexity of computing the permanent \cite{DBLP:journals/tcs/Valiant79}, Valiant introduced the class $\sharpp$ of counting problems associated with counting accepting runs of nondeterministic polynomial time Turing machines: a function $f\colon \Sigma^* \rightarrow \nats$ is in $\sharpp$ if there is a nondeterministic polynomial time Turing machine $\mathcal M$ such that $f(w)$ is equal to the number of accepting runs of $\mathcal M$ on $w$. Furthermore, for a class $\class$ of decision problems, he defined\footnote{Valiant originally used the notation~$\#\class$, but we added the subscript to distinguish the oracle-based classes from the classes introduced below.} $\#\old\class$ to be the class of counting problems induced by counting accepting runs of a nondeterministic polynomial time Turing machine with an oracle from $\class$. 

A nondeterministic polynomial time Turing machine $\mathcal M$ (with or without oracle) has at most $\mathcal{O}(2^{p(n)})$ different runs on inputs of length $n$ for some polynomial $p$. This means that there is an exponential upper bound on functions in $\sharpp$ and in $\#\old\class$ for every $\class$. However, an LTL tautology has exponentially many word-models of length $k$ and more than doubly-exponentially many tree-models of height $k$. This means, that no function in any of the counting classes defined above can capture the counting problems for LTL. 

To overcome this, we consider counting classes obtained by lifting the restriction on considering only nondeterministic polynomial time (oracle) machines: a function $f \colon \Sigma^* \rightarrow \nats$ is in $\sharppspaceprime$, if there is a nondeterministic polynomial \emph{space} Turing machine $\mathcal M$ such that $f(w)$ is equal to the number of accepting runs of $\mathcal M$ on $w$. The classes $\sharpexptimeprime$, $\sharpexpspaceprime$, and $\sharptwoexptimeprime$ are defined analogously\footnote{Following the ``satanic'' \cite{DBLP:journals/sigact/HemaspaandraV95} tradition of naming counting  classes, we drop the $\textsc{N}$ (standing for nondeterministic) in the names of the classes, just as it is done for $\sharpp$.}. Some of these classes appeared in the literature, e.g., $\sharppspaceprime$ was shown to be equal to $\fpspace$~\cite{Ladner89} (if the output is encoded in binary). Also,  computing a specific entry of a matrix power~$A^n$ is in $\sharppspaceprime$, if $A$ is represented succinctly and $n$ in binary~\cite{LohreyS14}, and counting self-avoiding walks in succinctly represented hypercubes is complete for $\sharpexptimeprime$~\cite{LiskiewiczOT03} under right-bit-shift reductions.

We place the LTL model counting problems in these classes. Unsurprisingly, the encoding of the bound $k$ is crucial: for unary bounds, we show counting word-models to be $\sharpp$-complete and show counting tree-models to be $\sharpexptimeprime$-complete. For binary bounds, the word-model counting problem is $\sharppspaceprime$-complete and counting tree-models is $\sharpexpspaceprime$-hard and in $\sharptwoexptimeprime$. The upper bounds hold for full LTL while the formulas for the lower bounds define safety properties (using only the temporal operators next and release). Thus, the lower bounds already hold for the fragment considered in~\cite{DBLP:conf/lata/FinkbeinerT14}.

The algorithms we present to prove the upper bounds are not practical since they are based on guessing a word (tree) and then model checking it. Hence, a deterministic variant of these algorithms would enumerate all words (trees) of length (height) $k$ and then run a model checking algorithm on them. In particular, the running time of the algorithms is exponential (or worse) in the bound $k$, which is in stark contrast to the practical algorithms~\cite{DBLP:conf/lata/FinkbeinerT14}. Our lower bounds are reductions from the problem of counting accepting runs of a Turing machine. For the word counting problem, the reductions are slight strengthenings of the reduction proving $\pspace$-hardness of the LTL model checking problem \cite{DBLP:journals/jacm/SistlaC85}. However, the reductions in the tree case are more involved (and to the best of our knowledge new), since we have to deal with exponential time respectively exponential space Turing machines. The main technical difficulties are to encode runs of exponential length and with configurations of exponential size into tree-models of ``small'' LTL formulas and to ensure that there is a one-to-one correspondence between accepting runs and models of the constructed formula.

All missing proofs can be found in the appendix.

\section{Preliminaries}

We represent models as \textit{labeled transition systems}. For a given finite set $\Upsilon$ of directions and a finite set $\Sigma$ of labels, a $\Sigma$-labeled $\Upsilon$-transition system is a tuple $\mathcal S = (S,s_0,\tau,o)$, consisting of a finite set of states~$S$, an initial state $s_0\in S$, a transition function $\tau\colon S\times \Upsilon \rightarrow S$, and a labeling function $o\colon S\rightarrow \Sigma$.
A \textit{path} in $\mathcal S$ is a sequence~$\pi\colon \mathbb N \rightarrow S \times \Upsilon$  of states and directions that follows the transition function, i.e., for all $i \in \mathbb N$  if $ \pi(i) = (s_i,e_i)$ and $\pi(i + 1) = (s_{i+1}, e_{i+1})$, then $s_{i+1} = \tau(s_i, e_i)$. We call the path initial if it starts with the initial state: $\pi(0) = (s_0 , e )$ for some $e \in \Upsilon$. 

We use Linear-time Temporal Logic (LTL) \cite{Pnueli:1977:TLP:1382431.1382534}, with the usual temporal operators Next $\LTLcircle$, Until $\cal U $, Release $\cal V$, and the derived operators Eventually $\LTLdiamond$ and Globally~$\LTLsquare$. We use $\nexts{i}$ to refer to  $i$ nested next operators. LTL formulas are defined over a set of atomic propositions $\AP = I \cup O$, which is partitioned into a set $I$ of input propositions and a set $O$ of output propositions. 
We denote the satisfaction of an LTL formula $\varphi$ by an infinite sequence $\sigma\colon \mathbb N \rightarrow 2^{\AP}$ of valuations of the atomic propositions  by $\sigma \models \varphi$. 
A $2^O$-labeled $2^I$-transition system $\mathcal S = (S,s_0,\tau,o)$ satisfies $\varphi$, if for every initial path $\pi$ the sequence $\sigma_{\pi}\colon i \mapsto o(\pi(i))$, where $o(s,e)= (o(s) \cup e)$, satisfies $\varphi$. Then $\mathcal S$ is a model of $\varphi$.

A \emph{$k$-word-model} of an LTL formula $\varphi$ over $\AP$ is a pair $(u,v)$ of finite words over $2^{\AP}$ such that $|u.v|=k$ and $u.v^\omega \models \varphi$. We call $u$ the prefix and $v$ the period of $(u,v)$. Note that an ultimately periodic word might be induced by more than one
$k$-word-model, i.e., $\{a\}^\omega$ is induced by the $2$-word-models $(\{a\}, \{a\})$ and $(\varepsilon, \{a\} \{a\})$.

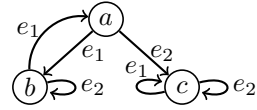
\begin{wrapfigure}{r}{0.27\textwidth}
\vspace{-1.7cm}
\center
\begin{tikzpicture}[inner sep=1pt,minimum size=0.45cm,node distance=2cm,semithick, scale = 1]
\node						at (0,.5) {};
\node[shape=circle,draw] (a1) at (0,0) { $a$ };
\node[shape=circle,draw] (a2) at (-1,-.9) { $b$ };
\node[shape=circle,draw] (a3) at (1,-.9) { $c$ };
\path[->,thick] (a1) edge node [ right=1mm]{$e_1$} (a2);
\path[->,thick] (a1) edge node [right]{$e_2$} (a3);
\path[->,thick] (a2) edge [bend left = 50] node [left]{$e_1$} (a1);
\path[->,thick] (a2) edge [loop right = 50] node [right]{$e_2$} (a2);
\path[->,thick] (a3) edge [loop right = 50] node [right]{$e_2$} (a3);
\path[->,thick] (a3) edge [loop left] node [above]{$e_1$} (a3);
\end{tikzpicture}
\vspace{-.1cm}
  \caption{A tree-model.}
 \label{fig:instances}
 \vspace{-.6cm}
 \end{wrapfigure}

A \emph{$k$-tree-model} of an LTL formula $\varphi$ over $\AP = I \cup O$ is a $2^O$-labeled~$2^I$-transition system that forms a tree (whose root is the initial state) of height $k$ with added back-edges from the leaves (for each leaf and direction, there is an edge to a state on the branch leading to the leaf) that satisfies $\varphi$. Figure~\ref{fig:instances} shows a tree-model of height one. 

Fix $\AP = I \cup O$. For a formula $\varphi$ and $k\in \mathbb{\nats}$, the \emph{$k$-word} (\emph{$k$-tree}) \emph{counting problem} asks to compute the number of $k$-word-models ($k$-tree-models up to isomorphism) of $\varphi$ over $\AP$.

\section{Counting Complexity Classes}
We use nondeterministic Turing machines with or without oracle access to define counting complexity classes, which we assume (without loss of generality) to terminate on every input. For background on (oracle) Turing machines and counting complexity we refer to~\cite{Arora:2009:CCM:1540612}.

A function~$f \colon \Sigma^* \rightarrow \nats$ is in the class~$\sharpp$~\cite{DBLP:journals/tcs/Valiant79} if there is a nondeterministic polynomial time Turing machine~$\mathcal M$ such that $f(w)$ is equal to the number of accepting runs of $\mathcal M$ on $w$. Similarly, for a given complexity class~$\class$ of decision problems, a function~$f$ is in $\#\old\class$~\cite{DBLP:journals/tcs/Valiant79,DBLP:journals/sigact/HemaspaandraV95} if there is a nondeterministic polynomial time oracle Turing machine~$\mathcal M$ with oracle in $\class$ such that $f(w)$ is equal to the number of accepting runs of $\mathcal M$ on $w$.
As a nondeterministic polynomial time Turing machine~$\mathcal M$ (with or without oracle) has at most $\mathcal{O}(2^{p(n)})$ runs on inputs of length~$n$ for some polynomial~$p$ (that only depends on~$\mathcal M$), we obtain an exponential upper bound on functions in $\sharpp$ and $\#\old\class$ for every $\class$, which explains the need for larger counting classes to characterize the model counting problems for $\ltl$.

A function~$f \colon \Sigma^* \rightarrow \nats$ is in $\sharppspaceprime$, if there is a nondeterministic polynomial \emph{space} Turing machine~$\mathcal M$ such that $f(w)$ is equal to the number of accepting runs of $\mathcal M$ on~$w$. $\sharpexptimeprime$, $\sharpexpspaceprime$, and $\sharptwoexptimeprime$  are defined by counting accepting runs of nondeterministic exponential time, exponential space, and  doubly-exponential time machines. 

\begin{proposition}\hfill
\begin{enumerate}

\item\label{oracleinclusions} 
$\sharpp \subseteq \sharppspace \subseteq \sharpexptime \subseteq \sharpnexptime \subseteq \sharpexpspace \subseteq$

 $\sharptwoexptime$.

\item\label{directinclusions}
$\sharppspaceprime \subseteq \sharpexptimeprime \subsetneq \sharpexpspaceprime \subseteq \sharptwoexptimeprime$.

\item $f \in \sharpexptimeprime$  implies $f(w) \in \mathcal{O}(2^{2^{p(|w|)}})$ for a polynomial~$p$.

\item $f \in  \sharptwoexptimeprime $ implies $f(w) \in \mathcal{O}(2^{2^{2^{p(|w|)}}})$ for a polynomial~$p$.

\item $w \mapsto 2^{2^{|w|}}$ is in $\sharppspaceprime $

\item $w \mapsto 2^{2^{2^{|w|}}}$ is in $\sharpexpspaceprime$. 

\end{enumerate}
\end{proposition}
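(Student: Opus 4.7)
The plan is to treat the six items in blocks, since several of them feed into each other.

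For parts \emph{\ref{oracleinclusions}} and \emph{\ref{directinclusions}} (the inclusions), I will use the standard fact that if $\class \subseteq \class'$ then any oracle for $\class$ can be simulated by an oracle for $\class'$, so $\#\old\class \subseteq \#\old\class'$ with exactly the same counting machine. Chaining this with $\pspace \subseteq \exptime \subseteq \nexptime \subseteq \expspace \subseteq \twoexptime$ gives (1). For (2), the non-strict inclusions are resource inclusions on the counting machine itself: a nondeterministic polynomial-space machine is in particular a nondeterministic exponential-time machine (by Savitch-style time bounds on $\npspace$ computations, which run in $2^{\mathcal{O}(p(n))}$ time), and similarly $\nexpspace \subseteq \ntime(2^{2^{\mathcal{O}(p(n))}})$, so the same machine serves as a witness. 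The strict inclusion $\sharpexptimeprime \subsetneq \sharpexpspaceprime$ I plan to derive from items (3) and (6): item (3) forces every function in $\sharpexptimeprime$ to be at most doubly exponential, whereas (6) exhibits a triply exponential function in $\sharpexpspaceprime$.

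For parts (3) and (4) I will use the elementary configuration-graph argument. A nondeterministic machine with time bound $T(n)$ and a constant branching factor has at most $c^{T(n)} = 2^{\mathcal{O}(T(n))}$ computation paths on inputs of length~$n$, hence at most that many accepting runs. Plugging $T(n) = 2^{p(n)}$ gives the $2^{2^{p(n)}}$ bound of (3), and $T(n) = 2^{2^{p(n)}}$ gives the $2^{2^{2^{p(n)}}}$ bound of (4).

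Parts (5) and (6) are the constructive ones and are where the real content sits. For (5) I will design a nondeterministic polynomial-space machine~$\mathcal M_5$ that on input~$w$ maintains a binary counter of $|w|$ bits (so it uses polynomial space and can count up to $2^{|w|}$). In each of the $2^{|w|}$ iterations it flips a single nondeterministic bit (and discards it), then always accepts. Since the $2^{|w|}$ bit choices are independent and all paths accept, $\mathcal M_5$ has exactly $2^{2^{|w|}}$ accepting runs. For (6) the same idea, scaled up: the counter has $2^{|w|}$ bits, which fits into exponential space, and it governs $2^{2^{|w|}}$ nondeterministic coin flips before accepting, producing exactly $2^{2^{2^{|w|}}}$ accepting runs.

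I expect the only subtle point to be the strict inclusion in (2); everything else is a routine resource-counting or machine-construction exercise. The cleanest way is to observe that (3) is a \emph{uniform} upper bound on $\sharpexptimeprime$-functions and then use (6) as an explicit separator — I do not need to diagonalize, I just compare the growth rates. Care must also be taken in (5) and (6) that the nondeterministic choices are made \emph{fresh} at every step so that the number of computation paths is genuinely $2^{N}$ where $N$ is the number of guessing steps; resetting the guessed bit after each use (so it does not affect the counter) ensures each of the $2^{N}$ bit-sequences yields a distinct computation path.
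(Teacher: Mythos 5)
Your proof is correct and follows exactly the routine arguments the paper has in mind (it omits a proof of this proposition entirely, treating all six items as standard): oracle/resource inclusions for (1)--(2), the $c^{T(n)}$ computation-tree bound for (3)--(4), explicit coin-flipping machines for (5)--(6), and the growth-rate comparison for the one strict inclusion. The only nitpick is terminological: the fact that a terminating nondeterministic polynomial-space machine runs in time $2^{\mathcal{O}(p(n))}$ is the configuration-counting bound (no path can repeat a configuration, else an infinite path would exist), not Savitch's theorem, which concerns deterministic simulation of nondeterministic space.
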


We continue by relating the oracle-based and the generalized  classes introduced above.

\begin{lemma}\label{inclusions}
\label{oracle2directinclusions}
$\#\old{}C \subsetneq \#C$ for $C \in \set{\pspace, \exptime, \expspace, \twoexptime}$.
\end{lemma}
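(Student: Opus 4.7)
The plan is to handle the inclusion $\#\old C \subseteq \#C$ and the strict separation $\#\old C \neq \#C$ separately.

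For the inclusion, fix $f \in \#\old C$ and let $M$ be a nondeterministic polynomial-time oracle machine with oracle $L \in C$ whose accepting runs on input $w$ count $f(w)$. I would construct a nondeterministic $C$-resource machine $M'$ that simulates $M$: each nondeterministic branching of $M$ is performed by $M'$, while each oracle query $q$ posed along the way is answered by running a fixed deterministic decision procedure for $L$ (of the appropriate $C$-type) as an in-line subroutine. Since $M$ poses at most polynomially many queries of polynomial size, and each of $\pspace, \exptime, \expspace, \twoexptime$ is closed under polynomially many sequential subroutine calls of its own type, $M'$ stays within the stated resource bound. Crucially, the decision subroutines are deterministic, so $M'$ introduces no additional nondeterministic branching; hence its accepting runs are in bijection with those of $M$, witnessing $f \in \#C$.

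For the strict containment, I would exploit the fact that membership in $\#\old C$ imposes an exponential upper bound on the function value: the underlying nondeterministic polynomial-time oracle machine has at most $2^{p(|w|)}$ runs on inputs of length $|w|$, for a fixed polynomial $p$, so any $f \in \#\old C$ satisfies $f(w) \le 2^{p(|w|)}$. By item~5 of the proposition, the function $g(w) = 2^{2^{|w|}}$ lies in $\sharppspaceprime$, and by item~\ref{directinclusions} we have $\sharppspaceprime \subseteq \#C$ for every $C$ in the list (trivially for $\pspace$ and via the stated chain for the other three). Since $g$ grows doubly exponentially in $|w|$, it eventually dominates every single-exponential bound $2^{p(|w|)}$, so $g \notin \#\old C$. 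This yields $\#\old C \subsetneq \#C$ for all four classes with a single witness.

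The only potentially subtle step is the inclusion: one must verify that the oracle simulation is genuinely deterministic (so that it does not inflate the accepting-run count) and that stringing together polynomially many $C$-type subroutines does not push $M'$ outside of $C$. For the space-bounded classes this amounts to reusing the subroutine's workspace between queries, and for the time-bounded classes to the estimate $p(n) \cdot 2^{q(n)} = 2^{O(q(n))}$ (and its doubly-exponential analogue). Both points are standard, and the separation part reduces to a bare counting-of-paths observation, so I would expect no real obstacle beyond bookkeeping.
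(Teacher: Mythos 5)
Your proposal is correct and follows essentially the same route as the paper: simulate the oracle machine while evaluating queries with a deterministic decider for the oracle language (so the accepting-run count is preserved), and separate the classes via the exponential upper bound on functions in $\#\old\class$ against the witness $w \mapsto 2^{2^{|w|}}$ from the proposition. Your observation that this single witness, lying already in $\sharppspaceprime$, settles all four separations at once is a small but clean improvement over the paper's ``the other claims are proven analogously.''
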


\begin{proof}
We show $\sharppspace \subsetneq \sharppspaceprime$, the other claims are proven analogously.
Let $f \in \sharppspace $, i.e., there is a nondeterministic polynomial time Turing machine~$\mathcal M$ with oracle~$A \in \pspace$ such that $f(w)$ is equal to the number of accepting runs of $\mathcal M$ on $w$. Note that all oracle queries are polynomially-sized in the length $|w|$ of the input to $\mathcal M$, since $\mathcal M$ is polynomially time-bounded. Hence, in nondeterministic polynomial space one can simulate $\mathcal M$ and evaluate the oracle calls explicitly by running a deterministic machine deciding $A$ in polynomial space. Since the oracle queries are evaluated deterministically, the simulation has as many accepting runs as $\mathcal M$ has. Thus, $f \in \sharppspaceprime$.

Now, consider the function~$|w| \mapsto 2^{2^{|w|}} $, which is in $\sharppspaceprime$, but not in $\sharppspace$.
\end{proof}

We use parsimonious reductions to define hardness and completeness, i.e., the most restrictive notion of reduction for counting problems. A counting problem~$f$ is $\sharpp$-hard, if for every $f' \in \sharpp$ there is a polynomial time computable function~$r$ such that $f'(x) = f(r(x))$ for all inputs~$x$. In particular, if $f'$ is induced by counting the accepting runs of $\mathcal M$, then $r$ depends on $\mathcal M$ (and possibly on its time-bound $p(n)$). Furthermore, $f$ is $\sharpp$-complete, if $f$ is $\sharpp$-hard and $f \in \sharpp$. Hardness and completeness for the other classes are defined analogously.

\section{Counting Word-Models}

In this section, we provide matching lower and upper bounds for the complexity of counting $k$-word-models of an LTL specification.

Our hardness proofs are based on constructing an LTL formula $\varphi_{\mathcal M}^w$ for a given Turing machine $\mathcal M$ and an 
input $w$ that encodes the accepting runs of $\mathcal M$ on $w$. Constructing such an LTL formula is straightforward and can be done in 
polynomial time for Turing machines with polynomially-sized configurations \cite{DBLP:journals/jacm/SistlaC85}. However, the challenge  
is to construct $\varphi_{\mathcal M}^w$ such that the number of accepting runs on $w$ is equal to the number of $k$-word-models of $\varphi_{\mathcal M}^w$ for a fixed  bound $k$.  
To this end, we have to enforce that each accepting run is represented by a unique $k$-word-model, i.e., by a unique prefix and period of total length $k$. We choose $k$ such that a run on $w$ of maximal length can be encoded in $k-1$ symbols and define $\varphi_{\mathcal M}^w$ such that it has only $k$-word-models whose period has length one. If a run of $\mathcal M$ is shorter than the maximal-length run we repeat the final configuration until reaching the maximal length,  which is achieved by accompanying the configurations in the encoding with consecutive id's.

For the upper bounds we show that there are appropriate nondeterministic Turing machines that guess an ultimately-periodic word and model check it against $\varphi$, i.e., the number of accepting runs on $k$ and $\varphi$ is equal to the number of $k$-word-models of $\varphi$.


\subsection{The Case of Unary Encodings.}

We show that counting word-models for unary bounds is $\sharpp$-complete.

\begin{theorem} The following problem is $\sharpp$-complete: Given an LTL formula~$\varphi$ and a bound~$k$ (in unary), how many $k$-word-models does $\varphi$ have?
\label{unaryComp}
\end{theorem}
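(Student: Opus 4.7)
The plan is to show both membership in $\sharpp$ and $\sharpp$-hardness.

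For membership, I would describe a nondeterministic polynomial time Turing machine~$\mathcal N$ which, on input $(\varphi, k)$ with $k$ in unary, first nondeterministically guesses a pair~$(u,v)$ of finite words over $2^{\AP}$ with $|u.v| = k$ (this needs only polynomially many bits, since $k$ is unary), and then deterministically verifies whether $u.v^\omega \models \varphi$ using the standard polynomial-time model checking algorithm for LTL on ultimately periodic words. By construction, there is exactly one accepting run of $\mathcal N$ for every $k$-word-model of $\varphi$, so the counting problem lies in $\sharpp$.

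For $\sharpp$-hardness, I would give a parsimonious reduction from the canonical $\sharpp$-complete problem of counting accepting runs of a nondeterministic polynomial time Turing machine~$\mathcal M$ on an input~$w$. Fixing a polynomial~$p$ that bounds the running time of~$\mathcal M$, the idea is to encode each run as a word of length exactly $k-1 = (p(|w|)+1)(p(|w|)+c)$, for a small constant~$c$: namely, the concatenation of $p(|w|)+1$ configurations of padded length $p(|w|)+c$, each prefixed by a consecutive timestamp written in binary using $\mathcal{O}(\log p(|w|))$ fresh propositions. Runs of fewer than $p(|w|)+1$ steps are extended by stuttering the final accepting configuration while continuing to increment the timestamps. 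The formula~$\varphi_{\mathcal M}^w$ then enforces, in the style of the Sistla--Clarke reduction, that positions $0,\ldots,k-2$ carry a valid accepting run of $\mathcal M$ on~$w$ (initial configuration, local transition correctness, eventual occurrence of an accepting configuration, and post-acceptance stuttering with incrementing timestamps). In addition, I would introduce a fresh proposition~$\mathit{pad}$ and conjoin $\bigwedge_{i=0}^{k-2} \nexts{i} \neg \mathit{pad}$, $\nexts{k-1}\mathit{pad}$, $\LTLsquare(\mathit{pad} \rightarrow \next \mathit{pad})$, and $\LTLsquare(\mathit{pad} \rightarrow \bigwedge_{q \neq \mathit{pad}} \neg q)$.

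The hard part will be ensuring a genuine bijection between accepting runs of $\mathcal M$ and $k$-word-models of $\varphi_{\mathcal M}^w$. Two pitfalls arise: (a) distinct pairs $(u,v)$ with $|u.v|=k$ can represent the same $\omega$-word, inflating the count; and (b) a single accepting run might admit several distinct encodings. The $\mathit{pad}$ conjuncts resolve~(a): they force $\mathit{pad}$ to hold from position $k-1$ onward and to fail earlier, so that any satisfying pair must have $|v|=1$ and $v=\{\mathit{pad}\}$, leaving~$u$ uniquely determined as the first $k-1$ letters of the $\omega$-word. The consecutive timestamps resolve~(b): they render all configuration slots of the encoding pairwise distinguishable, so the canonical encoding of a given accepting run (together with its stuttering padding) is unique. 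A routine check that $\varphi_{\mathcal M}^w$ and~$k$ are computable from $(\mathcal M, w)$ in polynomial time then completes the parsimonious reduction.
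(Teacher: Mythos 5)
Your proposal is correct and follows essentially the same route as the paper: the membership argument (guess $(u,v)$ with $|u.v|=k$ and model-check $u.v^\omega$ deterministically in polynomial time) is identical, and your hardness reduction mirrors the paper's, with your $\mathit{pad}$ proposition playing the role of the paper's dummy symbol $\bot$ in forcing a length-one period and your consecutive binary timestamps playing the role of the paper's configuration id's in making each accepting run's (padded) encoding unique.
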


\begin{proof}
We start with the hardness proof. Let $\mathcal M=(Q, q_\iota, Q_F,\Sigma, \delta)$ be a one-tape nondeterministic polynomial time Turing machine, where $Q$ is the set of states, $q_\iota$ is the initial state, $Q_F$ is the set of accepting states, $\Sigma$ is the alphabet, and $\delta\colon (Q \setminus Q_F)\times \Sigma \rightarrow 2^{Q\times \Sigma \times \{-1,1\}}$ is the transition function, where -1 and 1 encode the directions of the head. 
Note that the accepting states are terminal and that $\mathcal M$ rejects by terminating in a nonaccepting state. 
Let $\mathcal M$ be $p(n)$-time bounded for some polynomial $p$, and let $w=w_0 \cdots w_{n-1}$ be an input to $\mathcal M$. We construct an LTL formula $\varphi_{\mathcal M}^w$ and define a bound $k$, both polynomial in $|w|$ and $|\mathcal M|$, such that the number of accepting runs of $\mathcal M$ on $w$ is equal to the number of $k$-word-models of $\varphi_{\mathcal M}^w$.

 A run of $\mathcal M$ on $w$ is encoded by a finite alternating sequence of id's~$\text{id}_i$ and configurations~$c_i$ that is followed by an infinite repetition of a dummy symbol: 
\begin{align}	
	& \$\text{ id}_0~\#~c_0 ~\$\text{ id}_1 ~\#~ c_1~\$\text{ id}_2 ~\#~c_2~ \$ ~\cdots ~\$\text{ id}_{p(n)}~\#~c_{p(n)}~ (\bot)^\omega 
\label{wordshape}
\end{align}
Note that the period of the word-model is of the form~$\bot^\ell$ for some $\ell>0$. We will define $k$ such that maximal-length runs of $\mathcal M$ on $w$  can be encoded in the prefix, and such that the only possible period has length one by ensuring that exactly $p(n)$ configurations are encoded (by repeating the final configuration if necessary). This ensures that an accepting run is encoded by exactly one $k$-word-model.

 Let $l_r = p(n)$ be the maximal length of a run of $\mathcal M$ on $w$. The size of a configuration of $\mathcal M$ on $w$ is also bounded by $l_r$. For the id's we use an encoding of a binary counter with $l_c= \ceil{\text{log } l_r}$ many bits. Let $\AP= (Q \cup \Sigma) \dotcup \{b_1,\ldots,b_{l_c},\$,\#, \bot \}$ be the set of atomic propositions. The atomic propositions in $Q \cup \Sigma$ are used to encode the configuration of $\mathcal M$ by encoding the tape contents, the state of the machine, and the head position. The atomic propositions $b_1,\ldots,b_{l_c}$ represent the bit values of an id. The symbols \$ and  \#  are used as separators between id's and configurations, and $\bot$ is a dummy symbol for the model's period.  The distance between two \$ symbols and also between two \# symbols in the encoding is given by $d= l_r+3$ (see (\ref{wordshape})). Then, $\varphi_{\mathcal M}^w$ is the conjunction of the following formulas:
\begin{itemize}
	\item \emph{Id} encodes the id's of the configurations. It uses a formula~\emph{Inc}$(b_1, \ldots, b_{l_c}, d)$ that asserts that the number encoded by the bits~$b_j$ after $d$ steps is obtained by incrementing the number encoded at the current position. This formula will be reused in the tree case. 
	\item \emph{Init} asserts that the run of $\mathcal M$ starts with the initial configuration.
	\item \emph{Accept} asserts that the run must reach an accepting configuration.
	\item \emph{Config}  declares the consistency of two successive configurations with the transition relation of $\mathcal M$. Here, we use $d$ many next operators to relate the encoding of the two configurations.
	\item \emph{Repeat}  asserts that the encoding of an accepting configuration is repeated until the maximal id is reached
	\item \emph{Loop}  defines the period of the word-model, which may only contain~$\bot$.
\end{itemize}

All these properties can be expressed with polynomially-sized formulas, which can be found in the appendix. Furthermore, we need a formula to specify technical details: atomic propositions encoding the id's are not allowed to appear in the configurations and vice versa, symbols such as \$ and \#  only to appear as separators, each separator appears $p(n)$ times every $d$ positions, configuration encodings are represented by singleton sets of letters in $\Sigma$ with the exception of one set that contains a symbol from $Q$ to determine the head position and the state of $\mathcal M$, etc.

For $k= l_r\cdot (l_r+3) +1$, each accepting run of $\mathcal M$ on $w$ corresponds to exactly one $k$-word-model of $\varphi_{\mathcal M}^w$ that encodes the run in its prefix. Thus, the number of $k$-word-models is equal to the number of accepting runs of $\mathcal M$ on $w$. The formula~$\varphi_{\mathcal M}^w$ can be obtained in polynomial time in $|w|$ + $|\mathcal M|$, and $k$ (thus also its unary encoding) is polynomial in $|w|$.

To show that the problem is in $\sharpp$ we define a nondeterministic polynomial time Turing machine $\mathcal M$ as follows. $\mathcal M$ guesses a prefix $u$ and a period $v$ of an ultimately periodic word $u. v^\omega$ with $|u.v| = k$, and checks deterministically in polynomial time~\cite{KuhtzFinkbeiner09}, whether $u.v^\omega$ satisfies $\varphi$. Hence, for each $k$-word-model $(u,v)$ of $\varphi$ there is exactly one accepting run of $\mathcal M$. Thus, counting the $k$-word-models of $\varphi$ can be done by counting the accepting runs of $\mathcal M$ on the input $(k,\varphi)$.  
\end{proof}


\subsection{The Case of Binary Encodings.}
\label{subSec:wordBinary}
Now, we consider the word counting problem for binary bounds. As the input is  more compact, we have to deal with a larger complexity class. 

\begin{theorem} The following problem is $\sharppspaceprime$-complete: Given an LTL formula~$\varphi$ and a bound~$k$ (in binary), how many $k$-word-models does $\varphi$ have? 
\end{theorem}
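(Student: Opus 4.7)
My plan is to prove both directions by close analogy with the unary case (Theorem~\ref{unaryComp}), accommodating the exponential blow-up induced by the binary encoding of $k$.

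For $\sharppspaceprime$-hardness, I would adapt the unary reduction to a nondeterministic polynomial-space Turing machine~$\mathcal N$ running in space~$q(n)$ (and thus at most $2^{O(q(n))}$ time) on input~$w$. Configurations of $\mathcal N$ still have polynomial size, but a run now consists of up to $2^{O(q(n))}$ configurations. Reusing the same word shape $\$\text{id}_0 \# c_0 \$ \text{id}_1 \# c_1 \cdots \bot^\omega$ as in the unary case, but with binary configuration ids of $l_c = O(q(n))$ bits, the conjunction~$\varphi_{\mathcal N}^w$ of \emph{Id}, \emph{Init}, \emph{Accept}, \emph{Config}, \emph{Repeat}, and \emph{Loop} remains polynomial in $|w|+|\mathcal N|$; crucially, the binary increment formula \emph{Inc}$(b_1,\ldots,b_{l_c},d)$ is already designed to handle counters of bit-length $l_c$, so no new machinery is needed. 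Setting $k = O(2^{O(q(n))} \cdot q(n))$ yields a bound that is exponential but fits in polynomially many binary bits. As in the unary proof, each accepting run of $\mathcal N$ on $w$ corresponds to exactly one $k$-word-model of $\varphi_{\mathcal N}^w$ (the periodic suffix is forced to be $\bot^1$, and the padding by repetition of the final configuration makes the encoding rigid), giving a polynomial-time parsimonious reduction.

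For membership in $\sharppspaceprime$, I would construct an NPSPACE machine~$\mathcal M$ that on input~$(\varphi, k)$ nondeterministically guesses $|u| \in \{0,\ldots,k-1\}$ in binary (using $O(\log k)$ bits) and then guesses the $k$ symbols of $u.v$ one at a time, maintaining only polynomial-size state while simultaneously verifying $u.v^\omega \models \varphi$. To obtain parsimony---exactly one accepting run per $k$-word-model---the verification must be deterministic given the guessed word. This can be implemented by a tableau-based on-the-fly procedure exploiting periodicity: eventualities on the loop can be resolved by scanning the single period of length~$k-|u|$, giving a recursive model-checker of space $O(|\varphi|\cdot\log k)$; any residual nondeterministic tableau choices are pinned down by a canonicalization rule (e.g.\ lex-smallest consistent extension) so that each $(u,v)$ has a unique accepting witness. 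A cleaner alternative is to invoke Ladner's equivalence $\sharppspaceprime = \fpspace$ and show that the count is computable in polynomial space by expressing it as a specific entry of a power of a succinctly-represented transition matrix of an automaton for~$\varphi$, following~\cite{LohreyS14}.

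I expect the main obstacle to be the parsimonious deterministic verification in the upper bound: LTL's existential temporal operators ($\until, \LTLdiamond$) naturally introduce nondeterminism, and removing it on an exponentially long, on-the-fly generated word while staying in polynomial space requires either a carefully canonicalized tableau or a detour through $\fpspace$. The lower bound, by contrast, follows the unary blueprint almost verbatim, once one notices that the binary encoding of $k$ is exactly what is needed to accommodate the exponentially long runs of an NPSPACE machine.
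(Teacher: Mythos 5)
Your hardness argument is exactly the paper's: the same encoding as in the unary case, with polynomially many extra bits for the configuration ids so that the binary counter can enumerate the $2^{p'(n)}$ configurations of a polynomial-space run, and a bound $k$ that is exponential in $n$ but has a polynomial-size binary encoding. That direction is fine.

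The membership direction, however, has a genuine gap. You correctly identify the obstacle --- the verification must be deterministic given the guessed word, and the word is exponentially long and produced on the fly --- but neither of your proposed resolutions is a working algorithm. ``Resolving eventualities on the loop by scanning the single period'' presupposes that the period can be re-read, yet a polynomial-space machine must discard the guessed symbols as it goes (and re-guessing them would multiply the accepting runs, destroying parsimony). Likewise, pinning down ``residual tableau choices'' by a lex-smallest rule is the wrong fix: the satisfaction sets are uniquely determined by the semantics of the word, so what is needed is not an arbitrary tie-break (which could select sets that do not reflect the actual truth values, making the machine accept or reject incorrectly) but a set of constraints that admits exactly one solution per word. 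The paper's solution is to guess $u\$v$ \emph{backwards}, symbol by symbol, and to maintain for each position $j$ a set $C_j$ of subformulas intended to hold at $j$; atomic, Boolean, $\next$- and $\until$-formulas are then propagated backwards by purely local, deterministic rules ($\psi_0\until\psi_1 \in C_j$ iff $\psi_1\in C_j$, or $\psi_0\in C_j$ and $\psi_0\until\psi_1\in C_{j+1}$), storing only $C_j$, $C_{j+1}$ and the initially guessed $C_{k-1}$. The loop is closed by a final consistency check on $C_{k-1}$ against $C_i$ (where $i$ is the start of the period) together with an on-the-fly check that every $\until$-formula asserted somewhere in the period has its eventuality fulfilled somewhere in the period. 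This makes every $C_j$ uniquely determined by the word, so each $k$-word-model yields exactly one accepting run. Your alternative via Ladner's $\sharppspaceprime=\fpspace$ is a legitimately different route in principle, but as stated it is only a pointer: the count of $k$-word-models is a sum over lasso shapes of entries involving both a matrix power and an acceptance condition on the loop, not literally ``a specific entry of a matrix power,'' and making that precise in polynomial space is essentially the same amount of work as the direct construction.
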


\begin{proof}
The hardness proof is similar to the one for Theorem~\ref{unaryComp}: for a nondeterministic polynomial space Turing machine $\mathcal M$ bounded by a polynomial $p(n)$ and an input word $w$ we can define a formula $\varphi_{\mathcal M}^w$ in the same way as in Theorem~\ref{unaryComp}. The reason lies in that the size of configurations remains polynomial and the exponential number of configurations in a run can still be counted with a binary counter of polynomial size, i.e., we only have to use more bits~$b_j$ to encode the id's. Furthermore, we have to choose $k=2^{p'(n)}(p(n) +3)+1$ which can still be encoded using polynomially many bits. Here, $p'(n)$ is a polynomial (which only depends on $\mathcal M$) such that $\mathcal M$ terminates in at most $2^{p'(n)}$ steps on inputs of length~$n$.

For the proof of the upper bound we cannot just guess a $k$-model in polynomial space as in Theorem~\ref{unaryComp}, since the bound~$k$ is encoded in binary. Instead, we guess and verify the model on-the-fly relying on standard techniques for LTL model checking.

Formally, we construct a nondeterministic polynomial space Turing machine $\mathcal M$ which guesses a $k$-word-model $(u,v)$ by guessing $u\$v = w(0)\cdots w(i-1) \$$ $ w(i) \cdots w(k-1)$ symbol by symbol in a backwards fashion. Here, $\$$ is a fresh symbol to denote the beginning of the period. To meet the space requirement, $\mathcal M$ only stores the currently guessed symbol~$w(j)$, discards previously guessed symbols, and uses a binary counter to guess exactly $k$ symbols.

To verify whether $u.v^\omega$ satisfies $\varphi$, $\mathcal M$ also creates for every $j$ in the range~$0 \le j < k$ a set~$C_j$ of subformulas of $\varphi$ with the intention of $C_j$ containing exactly the subformulas which are satisfied in position~$j$ of $u.v^\omega$. Due to space-requirements, $\mathcal M$  only stores the set~$C_{k-1}$ as well as the sets $C_j$ and $C_{j+1}$, if $w(j)$ is the currently guessed symbol.
The set $C_{k-1}$ is guessed by $\mathcal M$ and the sets~$C_j$ for $j <k-1$ are uniquely determined by the following rules:

\begin{itemize}
	\item The membership of atomic propositions in $C_j$ is determined by $w(j)$, i.e., $C_j \cap \AP = w(j) $.
	\item  Conjunctions, disjunctions, and negations can be checked locally for consistency, e.g., $\neg \psi \in C_j $ if and only if $\psi \notin C_j$.
	\item $\next$-formulas are propagated backwards using the following equivalence: $\next\psi \in C_j $ if and only if $ \psi \in C_{j+1}$ (recall that $\mathcal M$ stores $C_j$ and $C_{j+1}$).
	\item $\until$-formulas are propagated backwards using the following equivalence: $\psi_0\until\psi_1 \in C_j $ if and only if $\psi_1 \in C_{j}$ or $\psi_0 \in C_j$ and $\psi_0 \until \psi_1 \in C_{j+1}$.
	\item $\cal V$-formulas can be rewritten into $\until$-formulas.
\end{itemize}
Once $\mathcal M$ has guessed the complete period~$v = w(i)\cdots w(k-1)$ it also checks that the guess of $C_{k-1}$ is correct (recall that $C_{k-1}$ is not discarded), which is the case if the following two requirements are met:
\begin{itemize}
	\item For every subformula~$\next\psi$ we have $\next\psi \in C_{k-1}$ if and only if $\psi \in C_i$.
	\item For every subformula~$\psi_0\until \psi_1$ we have $\psi_0\until\psi_1 \in C_{k-1} $ if and only if $\psi_1 \in C_{k-1}$ or $\psi_0 \in C_{k-1}$ and $\psi_0 \until \psi_1 \in C_{i}$. Furthermore, we have to require that $\psi_0 \until \psi_1 \in C_j$ for some $j$ in the range $i \le j < k$ implies $\psi_1 \in C_{j'}$ for some $j'$ in the range $i \le j' < k$. The latter condition can be checked on-the-fly while computing the $C_j$'s.
\end{itemize}
A straightforward structural induction over the construction of $\varphi$ shows that we have $\psi \in C_j$ if and only if $w(j)w(j+1)\cdots w(k-1)v^\omega \models \psi$ for every subformula~$\psi$ of $\varphi$. Hence, $u.v^\omega$ is a model of $\varphi$ if and only if $\varphi \in C_0$. Thus, $\mathcal M$ accepts if this is the case.
\end{proof}


\section{Counting Tree-Models}

In this section, we consider the tree counting problem for unary and binary bounds. There are at least doubly-exponentially many trees of height~$k$. Hence, if $k$ is encoded in binary, there are at least triply-exponentially many (in the size of the encoding of $k$) $k$-tree-models of a tautology. In order to capture these cardinalities using counting classes, we have to consider machines with that many runs, i.e., exponential time and exponential space machines.

In our hardness proofs, we again construct formulas $\varphi_\mathcal M^w$ that encode accepting runs of $\mathcal M$ on $w$ in trees. We choose binary trees, i.e., we consider a singleton set $I$ of input propositions. Recall that the power set of $I$ is used to (deterministically) label the edges in the tree. In the following, we identify the two elements of $2^I$ with the directions~$\texttt{left}$ and $\texttt{right}$. Note that we have to formalize the structure of our models and have to encode the runs of the machines using LTL. The semantics require a formula to be satisfied on all paths, which requires us to write conditional formulas of the form ``if the path has a certain form, then some property is satisfied''. We use two types of formulas: the ones of the first type describe the structure of the tree (e.g., it is complete and the targets of the back-edges) while the ones of the second type encode the actual run relying on this structure. The formulas of type one often assign addresses to nodes (sequences of bits that uniquely identify a leaf).

In the word case, we encoded runs of Turing machines whose configurations are of polynomial length. Hence, the distance between encodings of a tape cell in two successive configurations could be covered by a polynomial number of next-operators. Here, configurations are of exponential size. Thus, the challenge is to encode a run in a tree-model such that properties of two successive configurations can still be encoded by an LTL formula of polynomial size. We present two such encodings, one for unary and one for binary bounds.

For the upper bounds we show that there are appropriate nondeterministic machines that guess a finite tree with back-edges and model check it deterministically against $\varphi$, i.e., the number of accepting runs on $k$ and $\varphi$ is equal to the number of $k$-tree-models of $\varphi$.


\subsection{The Case of Unary Encodings} First, we consider tree-model counting for unary bounds.

\begin{theorem} The following problem is $\sharpexptimeprime$-complete: Given an LTL formula~$\varphi$ and a bound~$k$ (in unary), how many $k$-tree-models does $\varphi$ have?
\label{unaryCompTree}
\end{theorem}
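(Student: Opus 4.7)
The proof has two directions: an upper bound placing the problem in $\sharpexptimeprime$ and a matching lower bound via parsimonious reduction. For the upper bound, I build a nondeterministic exponential time machine~$\mathcal N$ which, on input $(\varphi,k)$ with $k$ in unary, guesses in a fixed canonical (say, breadth-first) order the complete binary tree of height~$k$: the $2^O$-label of each of its $2^{k+1}-1$ nodes together with, for every leaf and every direction in $2^I$, the back-edge target on the branch from the root to that leaf. This description has size $\mathcal{O}(2^k\cdot k\cdot |\varphi|)$ and so can be produced in exponential time. Then $\mathcal N$ deterministically model checks the guessed transition system against $\varphi$, feasible in time polynomial in the system size and exponential in $|\varphi|$. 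By canonicity, each $k$-tree-model is produced by exactly one nondeterministic run of $\mathcal N$, so the accepting runs of $\mathcal N$ are in bijection with the $k$-tree-models of $\varphi$.

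For $\sharpexptimeprime$-hardness I reduce from counting accepting runs of a nondeterministic $2^{p(n)}$-time Turing machine~$\mathcal M$ on input~$w$ of length~$n$; by a standard modification one can assume that every run has exactly $2^{p(n)}$ configurations of length exactly $2^{p(n)}$, with accepting configurations stuttered up to that length and distinct nondeterministic choices producing distinct successor configurations. I set $I=\{x\}$ and $k=2p(n)+c$ for a small constant~$c$, so $k$ is polynomial in $|w|+|\mathcal M|$. The intended tree-models are canonical: internal nodes carry a fixed empty output label, the root is marked by a fresh proposition~$r$, and each of the $2^k\ge 2^{2p(n)}$ leaves represents one cell~$(i,j)$ of the computation table, where the first $p(n)$ directions of its path from the root encode the configuration id~$i$ and the last $p(n)$ encode the cell position~$j$. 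The leaf's $2^O$-label records the tape content of that cell together with state and head information when the head sits on position~$j$ in configuration~$i$. To force canonicity of back-edges, $\varphi_{\mathcal M}^w$ asserts that $r$ holds exactly at positions $0,k+1,2(k+1),\ldots$ on every initial path, which pins every back-edge to the root.

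The formula $\varphi_{\mathcal M}^w$ is then the conjunction of (a)~the canonicity and labeling-format constraints just described, (b)~an initial-configuration clause fixing the labels of leaves with addresses $(0,j)$ according to $w$ for $j<n$ and to blanks for $n\le j<2^{p(n)}$, (c)~an acceptance clause $\LTLdiamond(\text{some leaf carries a state in }Q_F)$, and (d)~the transition clause: whenever four consecutive leaves on an initial path carry addresses $((i+1,j),(i,j-1),(i,j),(i,j+1))$, their labels must satisfy~$\delta$. Comparing two addresses read at positions that are $k+1$ apart on the same initial path reduces to polynomial-size bit-wise equality and \emph{Inc}-style subformulas already constructed in the word case (Theorem~\ref{unaryComp}). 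This is the main obstacle, encoding an exponentially large transition table by a polynomial LTL formula; it is resolved precisely because forcing all back-edges to the root makes every leaf visit a fresh copy readable at a predictable offset on the same path, so that LTL's implicit universal quantification over initial paths delivers the cell-by-cell transition check. Each accepting run of~$\mathcal M$ thus determines unique leaf labels, yielding a bijection between accepting runs and $k$-tree-models of $\varphi_{\mathcal M}^w$; since $\varphi_{\mathcal M}^w$ and the unary bound~$k$ are computable in polynomial time in~$|w|+|\mathcal M|$, the reduction is parsimonious.
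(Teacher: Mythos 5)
Your reduction follows the same strategy as the paper's: encode the $2^{p(n)}\times 2^{p(n)}$ computation tableau into the exponentially many leaves of a complete binary tree of height $\Theta(p(n))$, with the first half of a leaf's address naming the configuration and the second half the cell, and use back-edges to bring the cells of a local transition window within polynomial path-distance so that \emph{Inc}/\emph{Dec}-style address comparisons with $\mathcal{O}(p(n))$ next-operators suffice. The details differ in a way worth noting: the paper splits the tree into an upper-tree and lower-trees and routes the \texttt{left} back-edge to the global root but the \texttt{right} back-edge to the root of the leaf's own lower-tree, so intra-configuration moves cost only $p(n)+1$ steps, and it stores addresses as output propositions via \emph{Addr}; you instead route \emph{all} back-edges to the global root (your $r$-marking constraint does correctly force completeness and root-targeted back-edges) and read addresses off the direction propositions, paying $k+1$ steps per leaf visit in a four-leaf window. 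Both work and yield a parsimonious reduction, given your (standard, but necessary) normalization that distinct nondeterministic choices yield distinct successor configurations and that accepting runs are padded to length exactly $2^{p(n)}$.

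The one genuine gap is in the upper bound: your machine $\mathcal N$ guesses only \emph{complete binary} trees of height $k$, but a $k$-tree-model is any $2^O$-labeled $2^I$-transition system forming a tree of height $k$ with back-edges --- the branching degree is $|2^I|$ (not necessarily $2$), and the definition does not force all leaves to sit at depth exactly $k$, so non-complete shapes must be counted as well. This is easily repaired by having $\mathcal N$ also guess, in canonical order, which nodes are leaves (and hence the tree's shape) before guessing labels and back-edge targets; the description remains of size exponential in $k$ and the rest of your argument, including the bijection between accepting runs and $k$-tree-models up to isomorphism, goes through unchanged.
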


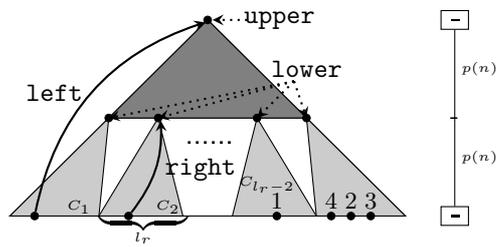
\begin{wrapfigure}{r}{0.53\textwidth}
\begin{center}
\vspace{-.8cm}
\begin{tikzpicture}[scale=.65,>=stealth]
\draw (0,0) -- (4,4) -- (8,0) -- cycle;
\draw (2,2) -- (6,2); 
\filldraw [fill= black!50] (2, 2) -- (4,4) -- (6, 2) -- cycle;
\filldraw [fill= black!20!white](0,0) -- (2,2) -- (1.8,0) -- cycle;
\node[circle,inner sep=1pt,draw,fill=black](r1)at(2,2){};
\filldraw [fill= black!20!white] (1.8, 0) -- (3,2) -- (3.5, 0) -- cycle;
\node[circle,inner sep=1pt,draw,fill=black] (r2) at (3,2){};
\filldraw [fill= black!20!white] (4.5, 0) -- (5,2) -- (6.2, 0) -- cycle; 
\node[circle,inner sep=1pt,draw,fill=black](b1)at(5,2){};
\filldraw [fill= black!20!white] (6.2, 0) -- (6,2) -- (8, 0) -- cycle;
\node[circle,inner sep=1pt,draw,fill=black](b1)at(6,2){}; 
\draw     [ dotted , line width=1](3.6,1.5) -- (3.8,1.5)-- (4.5,1.5);
\node[circle,inner sep=1pt,draw,fill=black](b0)at(4,4){};

\node [] (upper) at (5.45,4){\texttt{upper}}; \draw [->,  thick, dotted](4.8,4) -- (4.1,4);
\node [] (lower) at (6,3){\texttt{lower}}; \draw [->, thick,dotted](5.75,2.75) -- (2,2);\draw [->,thick, dotted](5.75,2.75) -- (3,2);\draw [->,thick, dotted](5.75,2.75) -- (5,2);\draw [->, thick,dotted](5.75,2.75) -- (6,2);

\node [] (h7) at (2.7,-.45) {\tiny $l_r$}; \node [] (brace) at (2.7,-.15) {$\underbrace{~~~~~~~~~~}$}; 
\node [] (h10) at (1.4,.5) [below] {\tiny$C_1$};
\node [] (h11) at (3.2,.5) [below] {\tiny$C_2$};
\node [] (h12) at (5.2,1) [below] {\tiny$C_{l_r-2}$};

\node [draw] (v1) at (9,0) {\_};
\node [draw] (v2) at (9,4) {\_}; 
\node [] (v3) at (9,2) {-};\node [] (v4) at (9.5,1.2) {\tiny$p(n)$}; \node [] (v5) at (9.5,3) {\tiny$p(n)$};

\path (v1) edge (v2) {};

\node [circle,inner sep=1pt,draw, fill=black] (l1) at (0.5,0) {};
\path [->,thick] (l1) edge [bend left= 30] node [left] {\texttt{left}} (3.95,3.95);
\node [circle,inner sep=1pt,draw, fill=black] (l2) at (2.4,0) {};
\path [->,thick] (l2) edge [bend right= 20] node [ right] {\texttt{right}} (3.05,1.95);

 \node [circle,inner sep=1pt,draw, fill=black] (l3) at (5.4,0) {};
 \node [] (alpha) at (5.4,0.3) {\small$1$};
\node [circle,inner sep=1pt,draw, fill=black] (l4) at (6.9,0) {};
\node [] (beta) at (6.9,0.3) {\small$2$};
\node [circle,inner sep=1pt,draw, fill=black] (l5) at (6.5,0) {};
\node [] (l) at (6.5,0.3) {\small$4$};
\node [circle,inner sep=1pt,draw, fill=black] (l6) at (7.3,0) {};
\node[] (r) at (7.3,0.3) {\small$3$};

\end{tikzpicture}
\vspace{-.7cm}
\end{center}

\caption{Encoding an exponentially long run in a tree-model of polynomial height. The configurations are encoded in the lower-trees (light gray subtrees).}
\label{blueredtree}
 \end{wrapfigure}

\noindent\textbf{Proof.} We start with the hardness proof. Let $\mathcal M=(Q, q_\iota, Q_F,\Sigma, \delta)$ be a one-tape nondeterministic exponential time Turing machine. Let $\mathcal M$ be $2^{p(n)}$-time bounded for a polynomial $p$ 
and let $w=w_0 \cdots w_{n-1}$ be an input to $\mathcal M$. 
We construct an LTL formula $\varphi_{\mathcal M}^w$ and define a bound $k$, both polynomial in $|w|$ and $|\mathcal M|$, 
such that the number of accepting runs of $\mathcal M$ on $w$ is equal to the number of $k$-tree-models of $\varphi_{\mathcal M}^w$.

A run of $\mathcal M$ is encoded in the leaves of a binary tree-model.
Let $l_r= 2^{ p(n)}$ be the maximal length of a run of $\mathcal M$ on $w$, which also bounds the size of a configuration. We choose $k=2p(n)$ to be the height of our tree-models. By using a formula labeling each of the first $k$ levels of the tree by a unique proposition we enforce that every model of height~$k$ is complete. Thus, it has $l_r^2$ many leaves, enough to encode a run of maximal length. Figure~\ref{blueredtree} shows the structure of our tree-model. 

Each configuration in the run is encoded in the leaves of a subtree of height $p(n)$, referred to as a \textit{lower}-tree (depicted by the light gray trees). The lower-trees are uniquely determined by a leaf of the \textit{upper}-tree (depicted in dark gray), which is the root of the lower-tree. By giving the leaves of the upper-tree id's, we also obtain unique id's for each of the lower-trees. These id's are used to enumerate the configurations of the run, i.e., two neighboring lower-trees encode two successive configurations of the run.  The id's can be determined by a binary counter with polynomially many bits. We also provide each leaf in a lower-tree with a unique id within this lower-tree. This is used to compare the contents of a tape cell in two successive configurations by comparing the labels of leaves with the same leaf id in two successive lower-trees. Thus, every leaf stores the id encoding of the configuration it is part of and the number of the cell it encodes.

Recall that in a tree-model each leaf has a back-edge for every direction. For the direction \texttt{left} we require a transition to the root of the upper-tree,  and for \texttt{right} a transition to the root of the own lower-tree. This enables us to compare two leaves in a lower-tree, or two leaves with the same id in two different lower-trees, with polynomially large formulas. 

The following formulas define the structure of our tree-models as explained above and also provide the nodes of the tree with  correct id's. We begin with \emph{Addr}(\texttt{root},$ a_1,\ldots,a_d$) which specifies a unique id for each leaf of a complete binary tree of height $d$ using bits $a_1,\ldots,a_d$, and provides the root of the tree with a label \texttt{root}. The id of a node depends on the sequence of \texttt{left} and \texttt{right} edges on the path from the root to this node, which is encoded in the bits~$a_1,\ldots,a_d$:
\begin{align*}
	\emph{Addr}(\texttt{root},a_1,\ldots,a_d)=  \texttt{root}  \wedge\hspace{-.1cm} \bigwedge \limits_{i=0}^{d-1}\left(\right. &\nexts{i}(\texttt{left} \rightarrow \nexts{d-i} \neg a_{i+1}) \wedge\\ &\nexts{i}(\texttt{right} \rightarrow \nexts{d-i}  a_{i+1})\left.\right)
\end{align*}
We use the formula \emph{Addr}(\texttt{upper},$u_1,\ldots,u_{p(n)}$) to address the upper-tree. This gives each lower-tree a unique id via the id of its root. We also supply each node in a lower-tree with the id of its root in the upper-tree:
 $ \bigwedge \nolimits_{p(n) \le i < k} \nexts{i}( \bigwedge \nolimits_{j=1}^{p(n)} (u_j \leftrightarrow \next u_{j})) $.
Furthermore, we use the formula~$\nexts{p(n)} \emph{Addr}(\texttt{lower},l_1,\ldots,l_{p(n)})$ to assign every leaf in a lower-tree a unique id within its lower-tree which essentially encodes the number of the tape cell it encodes. The next two formulas define the back-edges of the lower-trees. From each leaf, the \texttt{left} transition leads back to the root of the upper-tree  (recall that back-edges lead from a leaf to an ancestor), i.e., $ \nexts{{k}}(\texttt{left}\rightarrow  \next \texttt{upper})$,
and the \texttt{right} transition to the root of the lower-tree, i.e.,
        $ \nexts{{k}}(\texttt{right} \rightarrow  \next \texttt{lower} )$.
After setting up the structure of the trees, it remains to show how we encode a run in the leaves. We proceed with the same scheme as in the word case, and use the formula $\Delta_h(a_1,\ldots,a_m)$ which is satisfied, if and only if the bits $a_1,\ldots,a_m$ encode the number $h < 2^m $.
\begin{itemize}
	\item The formula \emph{Init} encodes the initial configuration in the lower-tree with id 0.
		\begin{align*}
				\nexts{k} \big[ \Delta_0(u_1,\ldots,u_{p(n)}) \rightarrow  & \big(
				( \Delta_0(l_1,\ldots,l_{p(n)}) \rightarrow q_\iota )\\
				& \wedge
				\bigwedge \limits_{0\le j < n} (\Delta_j(l_1,\ldots,l_{p(n)}) \rightarrow w_j )  \\
				&\wedge ((\bigwedge \limits_{0 \le j < n} \neg \Delta_j(l_1,\ldots,l_{p(n)})) \rightarrow \text{\textvisiblespace})~ \big)\big]
		\end{align*}
	\item The formula \emph{Accept} checks whether the rightmost lower-tree encodes an accepting configuration:
				$ \nexts{k} ((\Delta_{l_r}(u_1,\ldots,u_{p(n)}) \wedge \bigvee \nolimits_{q \in Q} q )\rightarrow \bigvee \nolimits_{q \in Q_F} q)$.
	\item The formulas $\emph{config}_{q,\alpha}$  and $\emph{config}_\alpha$ for states $q$ and symbols $\alpha$ encode the transition relation. For a leaf with labels $q$ and $\alpha $ (leaf $1$ in Figure~\ref{blueredtree}) and a transition $(q,\alpha, q',\beta, \texttt{dir})$, we have to check three leaves in the next lower-tree, namely, the leaf with the same id (leaf $2$) has to be labeled with $\beta$, and depending on \texttt{dir} either the successor leaf (leaf $3$) or the predecessor leaf (leaf $4$) has to be labeled with $q'$. The premise of the following formula only holds for paths that visit these leaves in the order given above, i.e., paths that lead to a leaf in a lower-tree, loop back to the root of the full tree and then lead to the same leaf id in the successor lower-tree (this takes $k+1$ edges), loop back to the root of this lower-tree and visit the leaf to the right (this takes $p(n)+1$ edges), back to the root of this lower-tree again and then to the leaf to the left (this takes $p(n)+1$ edges). To specify such a path, we use the formula \emph{Inc} to reach the successor leaf and a dual formula called \emph{Dec} to reach the predecessor leaf. This formula implements a decrement of a nonzero counter. Note that we have to require the paths to visit the successor and predecessor leaf in the next lower-tree, i.e., we have to check the bits~$u_j$ to reach the next lower-tree and the bits~$l_j$ to reach the leaves.
Thus, $\emph{config}_{q,\alpha}$ for $q \in Q\setminus Q_F$ is given by:
		\begin{align*}
				\nexts{k} \big[~ q & \wedge \alpha \wedge 	\emph{Inc}(u_1,\ldots,u_{p(n)},k+1) \wedge \bigwedge \nolimits_{i=1}^{p(n)} l_i \leftrightarrow \nexts{k+1} l_i)\\
				\wedge &\emph{Inc}(u_1,\ldots,u_{p(n)},k+p(n)+2)  \wedge \emph{Inc}(l_1,\ldots,l_{p(n)}, k+p(n)+2)\\
				 \wedge&\emph{Inc}(u_1,\ldots,u_{p(n)},k+2p(n)+3))  \wedge \emph{Dec}(l_1,\ldots,l_{p(n)},k+2p(n)+3)\\
				 \rightarrow &  \bigvee \nolimits_{{(q',\beta, \texttt{ dir}) \in \delta(q,\alpha)}} (\nexts{k+1} \beta \wedge \nexts{(k+1) + c_{\texttt{dir}} (p(n)+1) }q'  ) ~\big]
		\end{align*}	
Here, we have $c_{\texttt{dir}}=1$, if $\texttt{dir} = 1$, and $c_{\texttt{dir}}=2$, if $\texttt{dir} = -1$.
				
The formula $\emph{config}_\alpha$ determines the relation between the other tape cells' contents, namely where the head is not pointing to: 
		\begin{align*}
			\nexts{k} (\bigvee \limits_{i=1}^{p(n)} \neg u_i \wedge(\hspace{-2mm} \bigwedge \limits_{q \in Q \setminus Q_F} \hspace{-2mm} \neg q )\wedge \alpha  \wedge \emph{Inc}(u_1,\ldots,u_{p(n)},k+1)\\
			 \wedge  (\bigwedge \limits_{i=1}^{p(n)} l_i \leftrightarrow \nexts{k+1} l_i)  \rightarrow \nexts{k+1} \alpha)
		\end{align*}
	\item The formula \emph{Repeat} repeats accepting states in the next lower-tree, if the id of the current lower-tree is not maximal. The repetition of the letters is being taken care of by \emph{config}$_\alpha$.
		\begin{align*}
				  \nexts{k} \big[\big(\bigvee \limits_{i=1}^{p(n)} \neg u_i \wedge \emph{Inc}(u_1, \ldots,u_{p(n)},k+1) \wedge \bigwedge \limits_{i=1}^{p(n)} (l_i \leftrightarrow \nexts{k+1} l_i) \big)\rightarrow\\ \big(\hspace{-.2cm}\bigwedge \limits_{q_f \in Q_F}\hspace{-.2cm} q_f \rightarrow \nexts{k+1} q_f \big)\big]
		\end{align*}
\end{itemize}

Similar to the word case we need some additional formulas to prevent atomic propositions of configurations to appear elsewhere in the tree to guarantee the one-to-one relation between runs and tree-models. For example to prevent a state label from appearing twice in a configuration we use a formula that asserts that from a leaf in which a state is encoded, no other leaf with a state label is reachable within $p(n)+1$ steps, i.e., in the same lower-tree. This ensures that every configuration has exactly one state.

To show that the problem is in $\sharpexptimeprime$ we define a nondeterministic exponential time Turing machine $\mathcal M$ as follows.
$\mathcal M$ guesses a tree of height~$k$ (which is of exponential size) and checks whether it satisfies $\varphi$ using the classical model checking algorithm: $\mathcal M$ constructs the B\"uchi automaton recognizing the language of $\neg \varphi$ and checks whether the product of the tree and the automaton has an empty language. The automaton and the product are of exponential size and the emptiness check can be performed in deterministic polynomial time (in the size of the product). Hence, $\mathcal M$ runs in exponential time in $k$ and the size of $\varphi$. 
For each $k$-tree-model of $\varphi$, there is exactly one accepting run in $\mathcal M$. Thus, counting the $k$-tree-models of $\varphi$ can be done by counting the accepting runs of $\mathcal M$ on the input  $(k,\varphi)$.  \qed


\subsection{The Case of Binary Encodings.}
\label{subSec:treeBinary}
In this section, we consider tree-model counting  for binary bounds. Since the bound is encoded compactly, the trees we work with have exponential height and therefore doubly-exponential size. Unfortunately, our upper and lower bounds do not match (see the discussion in the next section). 

\begin{theorem} The following problem is $\sharpexpspaceprime$-hard and in $\sharptwoexptimeprime$: Given an LTL formula~$\varphi$ and a bound~$k$ (in binary), how many $k$-tree-models does $\varphi$ have?
\label{binaryHardnessTree}

\end{theorem}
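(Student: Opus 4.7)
The plan is to prove the two bounds separately, in the same spirit as Theorem~\ref{unaryCompTree} but scaled up by one exponential level.

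For the $\sharptwoexptimeprime$ upper bound I would design a nondeterministic doubly-exponential time Turing machine $\mathcal N$ that on input $(\varphi,k)$, with $k$ in binary, first guesses a $k$-tree-model node by node --- its size is at most doubly-exponential in the length of the input --- and then model-checks it against $\varphi$ using the classical automata-theoretic procedure: build a B\"uchi automaton for $\neg\varphi$ (single-exponential in $|\varphi|$), take the product with the guessed tree (doubly-exponential) and test emptiness in deterministic polynomial time in the product size. Since each $k$-tree-model is guessed by exactly one nondeterministic run of $\mathcal N$, the number of accepting runs on $(\varphi,k)$ equals the number of $k$-tree-models of $\varphi$.

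For $\sharpexpspaceprime$-hardness I would reduce from counting accepting runs of a nondeterministic $2^{p(n)}$-space (hence $2^{2^{p(n)}}$-time) bounded Turing machine $\mathcal M$ on input $w$. The overall architecture would mirror the unary tree construction: trees of height $k = 2^{p(n)} + p(n)$, with an upper tree of height $2^{p(n)}$ whose $2^{2^{p(n)}}$ leaves are roots of lower trees of height $p(n)$, each encoding one configuration of $2^{p(n)}$ cells; the back-edges (\texttt{left} to the root of the whole tree, \texttt{right} to the root of the own lower tree), the lower-tree addressing, and the formulas \emph{Init}, \emph{Accept}, \emph{Repeat}, together with the various technical sanity formulas, would transfer with only minor adjustments. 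Because $k$ has polynomially many bits in binary, the entire construction remains polynomial-time computable in $|w|+|\mathcal M|$.

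The main obstacle is that the id of a configuration now has $2^{p(n)}$ bits: neither the explicit bit propositions $u_1,\ldots,u_{p(n)}$ nor the exact $\nexts{k+1}, \nexts{k+p(n)+2}, \ldots$ navigations of the unary case can be used directly. I would address this by representing the configuration id \emph{implicitly}, as the path from the root of the upper tree to the root of the corresponding lower tree, and by replacing the increment formula \emph{Inc} with a carry-propagation gadget built from $\until$ and $\release$ operators. Along the infinite path that visits $(c,i)$, takes the left back-edge to the root, and descends to $(c+1,i)$, fresh auxiliary propositions would partition the upper-tree depths into a preserved-prefix zone, a single flipping depth, and a trailing ones-zone, with temporal constraints enforcing that on the second descent each direction matches the first in the preserved prefix, flips at the flipping depth, and is inverted through the ones-zone; the unary-style polynomial-size formula would then take over for the lower-tree portion to fix the cell index. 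The delicate part, which I expect to be the main technical obstacle, is designing these auxiliary markings and their constraints so that (i) the total formula stays polynomial in $|w|+|\mathcal M|$, (ii) the correspondence between accepting runs of $\mathcal M$ and $k$-tree-models is a bijection, and (iii) spurious tree-models that satisfy the formula without encoding a valid run are ruled out by dedicated technical conjuncts, analogous to the unary case but now adapted to the implicit address encoding.
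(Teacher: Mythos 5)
Your upper bound is exactly the paper's argument (guess the doubly-exponential tree, model-check it via the B\"uchi automaton for $\neg\varphi$), so that half is fine. The gap is in the hardness reduction, and it sits precisely at the point you yourself flag as ``the main technical obstacle.'' In your layout --- one upper tree of height $2^{p(n)}$ with the $2^{2^{p(n)}}$ configuration-trees hanging off its leaves --- two consecutive configurations are at \emph{exponential} distance along every path, and the configuration id has \emph{exponentially many} bits, spread implicitly over the directions of an exponentially long descent. The zone-marking idea does not repair this: whether depth $j$ lies in the preserved prefix, at the flip position, or in the trailing-ones zone of the transition $c \to c{+}1$ depends on the low-order bits of $c$, i.e.\ on directions taken \emph{below} depth $j$; but the upper-tree node at depth $j$ is a common ancestor of exponentially many lower trees with different zone classifications, so no static labeling of upper-tree nodes can record the zones. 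And even granting some per-path zone information, the constraint ``the direction at depth $j$ of the second descent equals/flips the direction at depth $j$ of the first descent, for all $j \le 2^{p(n)}$'' relates exponentially many pairs of positions, each pair separated by a fixed exponential distance; neither $\nexts{L}$ for exponential $L$ nor an $\until$/$\release$ pattern of the form ``at the next position satisfying $P$ \dots'' can express this with a polynomial-size formula, because identifying ``the corresponding depth-$j$ position on the second descent'' already requires comparing a $p(n)$-bit depth counter across that same exponential gap, or enumerating $j$ explicitly. (Your lower-tree cell-index comparison survives only because it involves polynomially many propositions and an LTL-definable target position; the configuration id enjoys neither property.)

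The paper escapes this not by a cleverer increment gadget but by a different tree shape: the configuration-trees (``inner-trees'' of height $m \approx p(n)$, each holding one exponential-size configuration in its $2^m$ leaves) are nested into one another so that the run order is the DFS order on inner-trees, and back-edges are routed --- using only polynomially many bits for the \emph{level} and the \emph{right-child-depth} of each inner-tree --- so that the root of the DFS-successor is reachable from any leaf of the current inner-tree in polynomially many steps. Consequently no exponential-bit id is ever compared and no exponential distance is ever bridged: ``the next configuration'' is a structural neighbour, and the unary-case formulas carry over with fixed polynomial offsets. To make your proof go through you would need to abandon the flat upper-tree/lower-tree layout and adopt such an interleaved arrangement (or supply a genuinely new mechanism for exponential-distance, exponential-width comparison in LTL, which is exactly what the flat layout would require).
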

\begin{proof}
Let $\mathcal M=(Q, q_\iota, Q_F,\Sigma, \delta)$ be a one-tape nondeterministic exponential space Turing machine
and let $w=w_0 \cdots w_{n-1}$ be an input to $\mathcal M$. Furthermore, let $l_c=2^{p(n)}-2$ be the maximal configuration length (for some polynomial~$p$) and
let $l_r= 2^{2^{ p'(n)}}$ be the maximal length of a run of $\mathcal M$ on $w$ ($p'$ is a polynomial which only depends on $\mathcal M$).

 We choose $k=m \cdot 2^{p'(n)}$ to be the height of our tree-models, where $m$ is the smallest power of two greater than $p(n)$.
Figure \ref{fig:DFS} shows the main structure of our tree-models. We use nonbalanced binary trees that are composed of trees of height $m$. We refer to the latter trees as the \emph{inner}-trees. The  outermost leaves of an inner-tree are inner nodes and the others are leaves in the tree-model. Hence, each inner-tree has two children, which are again inner-trees rooted at the leftmost respectively the rightmost leaf.

In each inner-tree, we will encode a configuration in a similar way as in the unary case (Theorem \ref{unaryCompTree}), namely in the leaves (except the  two leaves serving as roots for other inner trees, which explains the $-2$ in the definition of $l_c$). We encode the configurations of a run in the tree-model such that we traverse the inner-trees in a depth-first search manner (DFS). In Figure \ref{fig:DFS}, we can see how a run of 16 configurations can be encoded in a tree-model with four layers of inner-trees. To encode the DFS structure, we label each root of an inner-tree with its \emph{level} (the number of inner-tree ancestors) and with its so-called \emph{right-child-depth}: the number of right-child-inner-trees visited since the last left child to reach this tree (e.g., this value is $0$ for the left children $C_1,C_2,C_3,C_7$; it is $1$ for $C_6$ and $3$ for $C_{15}$). This will help to determine the next inner-tree in line in the DFS structure. We need a polynomial number of bits to encode these addresses. With the \texttt{right} transition we allow the leaves of an inner-tree to reach its root and we use \texttt{left} in the inner-tree of maximal level to reach the parent of the next inner-tree in DFS order. In this way, the distance between the encoding of a tape cell in two successive configurations is polynomial.

\begin{figure}[t]
	\center
	\begin{tikzpicture}[>=stealth]
		
		\draw (0,0) -- (.5,.5) -- (1,0) -- cycle;
		\draw (1.1,0) -- (1.6,.5) -- (2.1,0) -- cycle;
		\draw (2.3,0) -- (2.8,.5) -- (3.3,0) -- cycle;
		\draw (3.4,0) -- (3.9,.5) -- (4.4,0) -- cycle;
		\draw (4.6,0) -- (5.1,.5) -- (5.6,0) -- cycle;
		\draw (5.7,0) -- (6.2,.5) -- (6.7,0) -- cycle;
		\draw (6.9,0) -- (7.4,.5) -- (7.9,0) -- cycle;
		\draw (8,0) -- (8.5,.5) -- (9,0) -- cycle;
		
		\draw (.5,.5) -- (1,1) -- (1.6,.5) -- cycle;
		\draw (2.8,.5) -- (3.3,1) -- (3.9,.5) -- cycle;
		\draw (5.1,.5) -- (5.6,1) -- (6.2,.5) -- cycle;
		\draw (7.4,.5) -- (7.9,1) -- (8.5,.5) -- cycle;
		
		\draw (1,1) -- (2.2,1.5) -- (3.3,1) -- cycle;
		\draw (5.6,1) -- (6.7,1.5) -- (7.9,1) -- cycle;
		
		\draw (2.2,1.5) -- (4.4,2) -- (6.7,1.5) -- cycle;
		
		\node[](t1)at(4.4,1.8){\tiny $C_1$};
		\node[](t2)at(2.2,1.3){\tiny $C_2$};
		\node[](t3)at(1.05,.7){\tiny $C_3$};
		\node[](t4)at(.55,.2){\tiny $C_4$};
		\node[](t5)at(1.65,.2){\tiny $C_5$};
		\node[](t6)at(3.35,.7){\tiny $C_6$};
		\node[](t7)at(2.85,.2){\tiny $C_7$};
		\node[](t8)at(3.95,.2){\tiny $C_8$};
		\node[](t9)at(6.7,1.3){\tiny $C_9$};
		\node[](t10)at(5.65,.7){\tiny $C_{10}$};	
		\node[](t11)at(5.15,.2){\tiny $C_{11}$};
		\node[](t12)at(6.25,.2){\tiny $C_{12}$};
		\node[](t13)at(7.95,.7){\tiny $C_{13}$};
		\node[](t14)at(7.45,.2){\tiny $C_{14}$};
		\node[](t15)at(8.55,.2){\tiny $C_{15}$};
		
				\path [->](0,0) edge [dashed,thick, bend left=60] node [left]{\texttt{left}} (1,1);
				\path [->](2.1,0) edge [dashed,thick, bend right=30] node [left]{} (2.2,1.5);
				\path [->](3.3,0) edge [dashed,thick, bend left=20] node [left]{} (3.3,1);
				\path [->](4.4,0) edge [dashed,thick, bend right=40] node [left]{} (4.4,2);
				\path [->](4.6,0) edge [dashed,thick, bend left=30] node [left]{} (5.6,1);
				\path [->](6.7,0) edge [dashed,thick, bend right=30] node [left]{} (6.7,1.5);
				\path [->](6.9,0) edge [dashed,thick, bend left=30] node [left]{} (7.9,1);
				\path [->](9,0) edge [bend right = 30, thick,dashed] node [left]{} (8.5,.5);

				\node[circle,inner sep=1pt,draw,fill=black](r0)at(3,1.5){};
				\path [->](r0) edge [thick, bend left=30] node [left]{\texttt{right}} (4.4,2);
				
		\node [draw] (v1) at (10.5,0) {\_};
		\node [draw] (v2) at (10.5,2) {\_};
		\node [] (v3) at (10.7,.5) {~- $_{3m}$}; 
		\node [] (v4) at (10.7,1) {~- $_{2m}$};
		\node [] (v5) at (10.7,1.5) {- $_{m}$};
		\path (v1) edge (v2) {};
			
	\end{tikzpicture}
\caption{Tree-model with  DFS structure.}
\label{fig:DFS}

\end{figure}
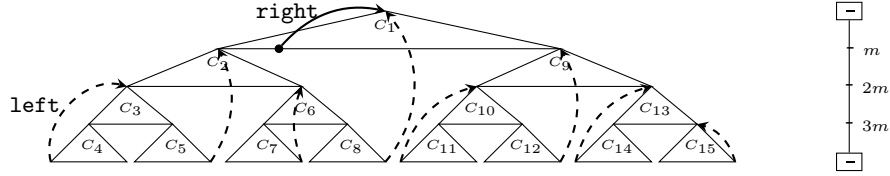
As the distance between an inner-tree and its successor is polynomial, the formulas for encoding the run in the tree-model adapt the ideas of the formulas in the unary case with slight modifications that deal with the DFS order of inner-trees. A detailed description of the construction can be found in the appendix.

The  upper bound is proved using the same algorithm as in the proof of Theorem~\ref{unaryCompTree}.
\end{proof}

\section{Discussion}

We investigated the complexity of the model counting problem for specifications in Linear-time temporal logic. The word-model counting problems are $\sharpp$-complete (for unary bounds) respectively $\sharppspaceprime$-complete (for binary bounds) while the tree-model counting problems are $\sharpexptimeprime$-complete respectively $\sharpexpspaceprime$-hard and in $\sharptwoexptimeprime$, i.e., the exact complexity of the tree-model counting problem for binary bounds is open.

The problem we face trying to lower the upper bound is that we cannot guess the complete tree-model in nondeterministic exponential space. To meet the space-requirements, we have to construct it step by step, as in the proof of the corresponding upper bound in the word case. However, the correctness of the on-the-fly model checking procedure described there relies on the fact that the model is an ultimately-periodic word. It is open whether the technique can be extended to tree-models. On the other hand, if we try to raise the lower bound, we have to encode doubly-exponential time Turing machines, which seems challenging using polynomially-sized LTL formulas. 

To conclude, let us mention another variation of the model counting problem: counting arbitrary transition systems, where the bound~$k$ now refers to the size of the transition system. 
For unary bounds, the problem is $\sharpp$-hard, which can be shown by strengthening Theorem~\ref{unaryComp}, and in $\sharppspace$, since LTL model checking is in $\pspace$. For binary bounds, the construction presented in Theorem~\ref{unaryCompTree} yields $\sharpexptimeprime$-hardness and the problem is in $\sharpexptimeprime$, which can be shown by adapting the algorithm presented in the theorem.

\subsection*{Acknowledgments.} We would like to thank Markus Lohrey and an anonymous reviewer for bringing Ladner's work on polynomial space counting~\cite{Ladner89} to our attention.


\bibliographystyle{plain}
\bibliography{biblio}


\newpage
\appendix

\section{Appendix}

\subsection{Proof of Theorem~\ref{unaryComp}}
\label{app:unarySPHard}
In this section, we present the formulas omitted in the proof of $\sharpp$-hardness of the word counting problem for unary bounds.

We start with the formula \emph{Id}, which uses the formula \emph{Inc} that enforces an increment of a binary counter. For later reuse, \emph{Inc} is parameterized by the propositions $b_1,\ldots, b_\ell$ encoding the bits and the distance~$d$ between the two positions to be compared. Intuitively, the different subformulas distinguish whether the increment ripples through to the current bit $b_i$ or not. Note that the increment property only has to hold if there is no overflow of the counter.
\begin{align*}
 \emph{Inc}(b_1,\ldots,b_{\ell},d) = ( \bigvee\limits_{0 < i \le \ell} \neg b_i ) \rightarrow \bigwedge \limits_{{0<i\leq  \ell }} \big[\,
& ((\neg b_i \wedge \phantom{\neg}\bigwedge \limits_{{i < j \le \ell }} \phantom{\neg} b_j) \rightarrow \nexts{{d}} \phantom{\neg} b_i)\\
\wedge&  ((\neg b_i \wedge {\neg}\bigwedge \limits_{{i < j \le \ell }} \phantom{\neg} b_j) \rightarrow \nexts{{d}} {\neg} b_i)\\
\wedge&  ((\phantom{\neg} b_i \wedge \phantom{\neg}\bigwedge \limits_{{i < j \le \ell }} \phantom{\neg} b_j) \rightarrow \nexts{{d}} {\neg} b_i)\\
\wedge&  ((\phantom{\neg} b_i \wedge {\neg}\bigwedge \limits_{{i < j \le \ell }} \phantom{\neg} b_j) \rightarrow \nexts{{d}} \phantom{\neg} b_i)
 \,\big]\\ 
\end{align*}
Now, the formula~\emph{Id} is defined by initializing the counter to zero and always requiring an increment after a $\$$-separator:
\begin{align*}
	\emph{Id} = \$ ~ \wedge ~ \next( \bigwedge\limits_{0<j\le l_c}\neg b_j	) ~ \wedge ~ \LTLsquare (\$ \rightarrow  \next \emph{Inc}(b_1, \ldots, b_{l_c}, d)) 
\end{align*}

We continue with the formula \emph{Init}. In the initial configuration the tape of $\mathcal M$ contains the input word $w$,  the head is on the first cell, and $\mathcal M$ is in its initial state:
 \begin{align*}
 	\emph{Init}=  \nexts{{2}} (\#~ \wedge ~\next q_{\iota} ~\wedge~(\bigwedge \limits_{0 \leq j < n} \nexts{{j}} w_j) ~ \wedge~ (\bigwedge \limits_{n \le j \leq l_r} \nexts{{j}} \text{\textvisiblespace})~)
 \end{align*}
The symbol \text{\textvisiblespace} ~refers to the blank cells of the tape.

 The formula \emph{Accept} considers the maximal id and checks whether it is followed by an accepting configuration:
\begin{align*}
	\emph{Accept}=\LTLsquare (\$ \wedge \next(\bigwedge \limits_{0<j\leq l_c} b_j )~~ \rightarrow ~~ \bigvee \limits_{q\in Q_F} \bigvee \limits_{0<j\leq l_r} \nexts{{j+2}}q)
\end{align*}
 For  atomic propositions $q\in Q\setminus Q_F$ and $\alpha \in\Sigma$ a formula $\textit{config}_{q,\alpha}$ asserts the relation between the states, the head positions, and the content of the cell where the head is pointing to in two successive configurations:
\begin{align*}
 \textit{config}_{q,\alpha} = \LTLsquare(~q \wedge \alpha \rightarrow  \bigvee  \limits_{_{(q',\beta, \text{dir}) \in \delta(q,\alpha)}} \nexts{d} \beta \wedge \nexts{{d+\text{dir}}}q'~~ )
\end{align*}
Another formula $\textit{config}_\alpha$ asserts the relation of the other tape cells of successive configurations; the content of these cells is copied, unless the id is maximal\footnote{Note that this is not necessary for $\textit{config}_{q,\alpha}$ since the machine terminates in at most $p(n)$ steps}:
\begin{align*}
  \textit{config}_{\alpha} = \LTLsquare( \$ \wedge \next (\bigvee \limits_{0<j\le l_c} \neg b_j) \rightarrow    \bigwedge \limits_{0<j\le l_r} \nexts{j+2}( (\bigwedge 
  \limits_{q\in Q\setminus Q_F} \neg q ) \wedge \alpha \rightarrow  \nexts{d} 
  \alpha ))
\end{align*}
\emph{Config} is the conjunction of all formulas $\textit{config}_{\alpha} $ and $\textit{config}_{q,\alpha}$. 

The formula \emph{Repeat} requires an accepting configuration to be repeated if the id is not yet maximal. The repetition of the letters is taken care of by the formulas $\textit{config}_{\alpha} $. Hence, \emph{Repeat} only requires to copy the state and the head position.

\begin{align*}
\emph{Repeat}=\LTLsquare (\$ \wedge (\next \bigvee \limits_{_{0<j\leq { l_c}}}\neg b_j) \rightarrow
 \bigwedge \limits_{q_f \in Q_F	}  \bigwedge \limits_{0<j\leq l_r} \next^{^{j+2}}(q_f  \rightarrow \nexts{{d}}  q_f))
\end{align*}
Finally, the period of the model is fixed by the formula \emph{Loop} which requires the symbol $\bot$ to be repeated after reaching the configuration with the maximal id:

\begin{align*}
 \emph{Loop}=\LTLsquare (\$ \wedge \next(\bigwedge \limits_{0<j\leq {l_c}}  b_j) \rightarrow \nexts{{l_r+2}}~ \LTLsquare \bot)
\end{align*}

\subsection{Proof of Theorem~\ref{binaryHardnessTree}}

The following formulas define the structure of our tree models and also provide them with the correct level and right-child-depth.  
	We use propositions $\iota_1,\ldots,\iota_m$ to give \emph{i}ds to the leaves of 
	an inner-tree. For the $2^{p'(n)}$ different  levels of  inner-trees in our tree models we use propositions 
	$l_1,\ldots,l_h$, where $h =p'(n)$, to encode for each inner-tree its  \emph{l}evel. We also give internally for each node in an inner-tree its \emph{d}epth in this tree via $d_1,\ldots,d_{{\text{log}(m)}}$ (remember that $m$ is a power of $2$). Propositions $r_1,\ldots,r_h$ are used to determine the inner-tree's \emph{r}ight-child-depth. The maximum right-child-depth that can be reached is $2^{p'(n)}$, namely for the rightmost inner-tree at the maximal level.
	
We start by labeling each root of an inner-tree (and no other vertices) with the label \texttt{root}:
\begin{align*}
		&\texttt{root} \wedge \\
		&\LTLsquare \big[\,\big( \texttt{root} \wedge \neg (\bigwedge \limits_{0< i \leq h} l_i ) \wedge ((\bigwedge \limits_{0\le j < m} \nexts{j} \texttt{left})\vee (\bigwedge \limits_{0\le j <m} \nexts{j} \texttt{right}))\big)\leftrightarrow \nexts{m} \texttt{root} )\,\big]
\end{align*}
The negation of the big conjunction over propositions $l_i$ is used to exclude inner-trees of the last level. How the levels are defined in the tree is shown in more detail in the formula \emph{depth}.

We encode a configuration in the leaves of an inner-tree in the same way as in the unary case (Lemma \ref{unaryCompTree}). Therefore, we provide the leaves with unique id's, which enable us to compare the contents of tape cells in two successive configurations by comparing leaves with the same id in two successive inner-trees. To this end, we use the formula  \textit{Addr}, as defined in the unary case, to equip the leaves with unique id's within their inner-tree:
\begin{align*}
	\LTLsquare(\texttt{root} \rightarrow \textit{Addr}(\texttt{root},\iota_1,\ldots,\iota_{m}))
\end{align*}

The next formula uses the propositions $l_1,\ldots,l_h$ to supply each inner-tree with its level, which is equal to the number of \texttt{root} labels visited from the root to this inner-tree. Line~(1) assigns the root of the tree model with level 0. Line~(2) assigns each node in an inner-tree, with the exception of the leaves labeled with \texttt{root} (the  outermost leaves in all levels but the last), with the level of its inner-tree root. Line~(3) gives, inductively, each root of an inner-tree its inner-tree level. 
\setcounter{equation}{0}
\begin{align}
		\emph{depth}= &\bigwedge \limits_{0<j\le h} \neg l_j\\
		&\wedge \LTLsquare (\texttt{root} \rightarrow \bigwedge \limits_{0 \le j < m} \nexts{j}(\next \neg \texttt{root} \rightarrow \bigwedge \limits_{0<j\le h} ( l_j \leftrightarrow \next l_j ))\\
					  &  \wedge \LTLsquare (\texttt{root} \wedge (\bigvee \limits_{0<j\le h} \neg l_j) \wedge \nexts{m} \texttt{root} \rightarrow  \emph{Inc}(l_1,\ldots,l_{h},m)
\end{align}

If \emph{depth} is satisfied then the tree contains a doubly-exponential number of inner-trees, enough to encode a run of $\mathcal M$ on $w$.

The following formula  gives each inner-tree its right-child-depth in the tree model. If we are at a root of an inner-tree we reach the root of its left and right child in $m$ steps via the  outermost leaves. The formula counts the number of right-children visited along the way including the visited tree (Line~(3)). Every time we visit a left-child the counter is reset (Line~(4)). We again supply each node in an inner-tree, with the exception of the outermost leaves (Line (5)), with the right-child-depth of this tree. However the outermost leaves of all maximal-level inner-trees  are labeled with the right-child-depth of the inner-tree.

\setcounter{equation}{0}
\begin{align}
	\emph{rLevel}= &\bigwedge \limits_{0<j\le h}  \neg r_j &  \\
				    & \wedge \LTLsquare (\texttt{root} \wedge \bigvee \limits_{0<j\le h} \neg l_j  \rightarrow & \\
					& \quad\quad \quad \quad(( \bigwedge \limits_{0<j \le m} \nexts{j}\texttt{right}) \rightarrow \emph{Inc}(r_1,\ldots,r_{h},m))\\
					& \quad\quad\quad \wedge ((\bigwedge \limits_{0<j \le m} \nexts{j}\texttt{left} )\rightarrow \bigwedge \limits_{0<j\le h}\neg r_j))\\
					& \wedge \LTLsquare ( \texttt{root} \rightarrow \bigwedge \limits_{0\le j <m} \nexts{j}(\next\neg \texttt{root} \rightarrow \bigwedge \limits_{0<j\le h} r_j \leftrightarrow \next r_j))  )
 \end{align}

Now that we have defined all the id's we need we show how to route the transitions at the leaves to the DFS positions as described earlier. To compute the correct node we only need to compute its level, because in the definition of our tree models we force a back-edge to go to an ancestor node. This level can be computed as follows. If we are at a leaf of an inner-tree in the maximal level, reached by $j$ many right children since the last left child the back-edge goes to the root with level $2^{p'(n)} -(j+1)$ (cf.\ Figure \ref{fig:DFS}). This can be formulated by incrementing the right-child-depth  described with bits $r_1,\ldots,r_h$ and inverting the bits of the result. The propositions $d_1,\ldots,d_{{\text{log}(m)}}$ are used to talk about the leaves of an inner-tree. If we are at a leaf in an inner-tree of maximal level we move with direction \texttt{left} to the next DFS position, namely the root of the next inner-tree in the DFS Line~(computed by \emph{addNflip}):

\begin{align*}
	&\emph{DFS}=\\ &\LTLsquare ( \bigwedge \limits_{0<j\le {\text{log}(m)}} d_j \wedge \bigwedge \limits_{0<j\le h} l_j \wedge \texttt{left} \rightarrow \next \texttt{root} \wedge \emph{addNflip}(l_1,\ldots,l_{h},r_1,\ldots,r_{h}))
\end{align*}
where:

\begin{align*}
\emph{addNflip}(l_1,\ldots,l_{c},r_1,\ldots ,r_{c})=~    \bigwedge \limits_{{0<i\leq  c}}\big[
& ((\neg r_i \wedge \phantom{\neg}\bigwedge \limits_{{i < j \leq  c}} r_j) \rightarrow \next \neg l_i)\\
\wedge &  ((\neg r_i \wedge \neg \bigwedge \limits_{{i < j \leq  c }}  r_j) \rightarrow \next  \phantom{\neg} l_i)\\
\wedge &  (( \phantom{\neg} r_i \wedge \phantom{\neg}\bigwedge \limits_{{i < j \leq  c}}   r_j) \rightarrow \next \phantom{\neg} l_i)\\
\wedge &  (( \phantom{\neg} r_i \wedge \neg \bigwedge \limits_{{i < j \leq  c}} r_j) \rightarrow \next \neg l_i)
\end{align*}
Notice that \emph{addNflip} is similar to \emph{Inc} with the difference of flipping the bit.

Finally, a formula that asserts that from leaves of maximal  level inner-trees a \texttt{right} transition goes to the root of these trees (Line~(1)). For leaves of inner-trees of non-maximal level (the outermost leaves are not considered leaves of the tree model as their \texttt{right} and \texttt{left} transitions lead to subtrees), we move with both \texttt{left} and \texttt{right} to the root of their inner-trees (Line~(2)):  
\setcounter{equation}{0}
\begin{align}
	&~~~\LTLsquare (\bigwedge \limits_{0<j \le { \text{log}(m)}} d_j  \wedge  \bigwedge \limits_{0<j \le h} l_j   \wedge \texttt{right} \rightarrow \next (\texttt{root} \wedge \bigwedge \limits_{0<j \le h} l_j))\\
	& \wedge \LTLsquare (\bigwedge \limits_{0<j \le { \text{log}(m)}} d_j  \wedge ( \bigvee \limits_{0<j \le h} \neg l_j ) \wedge \neg \texttt{root} \rightarrow \next \texttt{root} \wedge  (\bigwedge \limits_{0<j\le h} l_j \leftrightarrow \next l_j)) 
\end{align}
Now, we present the formulas that encode the run of the Turing machine. Here, we again use the formula~$\Delta_h(a_1, \ldots, a_m)$ which is satisfied, if and only if the bits $a_1, \ldots, a_m$ encode the number $h < 2^m$.
	
\begin{itemize}
	\item The formula \emph{Init}:
	\begin{align*}
			\emph{Init} = \nexts{m} (&(\Delta_1(\iota_1,\ldots,\iota_{m}) \rightarrow q_\iota)\\
			& \wedge (\bigwedge \limits_{0 \le j < n} (\Delta_{j+1}(\iota_1,\ldots,\iota_{m}) \rightarrow w_j))\\
				& \wedge ( (\bigwedge \limits_{0 < j \le n}( \neg \Delta_j(\iota_1,\ldots,\iota_{m})) \rightarrow \text{\textvisiblespace})))
	\end{align*}
Note that we encode the input word $w=w_0 \cdots w_{n-1}$ in the leaves with id's $1$ to $n-1$. In our encoding the outermost leaves of the inner-trees will not encode any tape content of the Turing machine. This is due to the fact that in some inner-trees these leaves have no back-edges from which we can directly reach the leaves of the next inner-tree in DFS order. Thus, the content in the next configuration is not accessible.
 
	\item The formula \emph{Accept} checks the  last inner-tree in DFS order for an accepting configuration. The first release formula selects the outermost right path and stops at the root of the last inner-tree. If we arrive at this root we assert that the state in the configuration of this inner-tree must be accepting:

	\setcounter{equation}{0}
	\begin{align}
		&(\texttt{right} \wedge \neg \Delta_{2^h-1}(l_1,\ldots,l_{h})) {\mathcal V} ( \Delta_{2^h-1}(l_1,\ldots,l_{h})\\ 
		&\rightarrow (\texttt{right} \wedge \neg \Delta_{2^h-1}(l_1,\ldots,l_{h})) {\cal V} \\
		&( \Delta_{2^h-1}(l_1,\ldots,l_{h})  \wedge 
					\nexts{m} ( \bigvee \limits_{q \in Q} q \rightarrow \bigvee \limits_{q \in Q_F} q)))
	\end{align}
	\item We define the formula \emph{Next} to determine the successor inner-tree of an inner-tree, i.e., the formula holds at a vertex on a path if after $m$ steps on this path the root of the next inner-tree in DFS order is reached:
	\setcounter{equation}{0}
	\begin{align}
		\emph{Next}= & [(\bigvee \limits_{0<j \le m}\neg l_j ) \\
		&\rightarrow \texttt{right} \wedge (\bigwedge \limits_{0<j \le m} \nexts{j}\texttt{left})] \\
		& \wedge [(\bigwedge \limits_{0<j \le m} l_j ) \\
		&\rightarrow \texttt{left} \wedge (\bigwedge \limits_{0<j \le m} \nexts{j}\texttt{right})]
	\end{align}
	We distinguish two cases asserted by Lines (1) and (3), namely the case of an inner-tree in the maximal  level and one in a non-maximal level. In the second case, the successor tree is the left child of the current inner-tree. Here, we go up to the root of the inner-tree and traverse down the left side to the root of the left child (Line~(2)). If we are in the maximal level the successor tree is reached via the DFS order, which means, going to the inner-tree in DFS order and traversing down the right side to the right child (Line~(4)). 
	
\item If we are at a leaf with symbol $\alpha \in \Sigma$ and state $q \in Q\setminus Q_F$ we move to the root of the next inner-tree in DFS order (this is determined by the formula \emph{Next}). In this tree we have to check whether $\alpha$ is rewritten to the correct symbol and whether the head moved to the correct position. This is checked in the same way as in the proof of the unary case, namely, in three phases: we consider a path that leads to a leaf in successor inner-tree with the same leaf id, loops back up to the root of the inner-tree, and leads down to the same tree and visits the leaf to the right, loops back up and down the same tree and visits the leaf on the left (Lines (2),(3)). In Line~(4), the distance $2m+1$ results from going one step in the inner-tree and then going $2 m$ steps down to the leaves of the successor tree. The distance $3m+2$ results from looping in the same tree a second time, and $4m+3$ from looping a third time.

Here, we again use the formulas \emph{Inc} and \emph{Dec} introduced in the previous section of the appendix.
	\setcounter{equation}{0}
	\begin{align}
		\emph{config}_{q,\alpha} = \LTLsquare (& q  \wedge \alpha \wedge\emph{Next}  \wedge~~
				            (\bigwedge \limits_{j=1}^{m} \iota_j \leftrightarrow \nexts{4m+1} \iota_j)   \\ 
						  & \wedge \emph{Inc}(\iota_1,\ldots,\iota_{m}, 3m+2)) \wedge  \nexts{2m+1}(  \texttt{right}\wedge \next \texttt{root})\\
						  & \wedge \emph{Dec}(\iota_1,\ldots,\iota_{m}, 4m+3) \wedge  \nexts{3m+2}(  \texttt{right}\wedge \next \texttt{root})\\
						  & \rightarrow  ( \bigvee \limits_{_{(q',\beta, \texttt{ dir}) \in \delta(q,\alpha)}} \nexts{2m+1} \beta  \wedge \nexts{c_\texttt{dir}(m+1)-1} q' )
	\end{align}
	where $c =3$ for $\texttt{dir}=1$ and $c=4$ for $\texttt{dir}=-1$. 
	\item The formula $\emph{config}_\alpha$ is similar :
	\begin{align*}
		\emph{config}_{\alpha} = \LTLsquare (& (\bigwedge \limits_{q \in Q \setminus Q_F} \neg q )  \wedge \alpha \wedge \emph{Next} \wedge (\bigwedge \limits_{j=1}^{m} \iota_j \leftrightarrow \nexts{2m+1} \iota_j)  \rightarrow \nexts{2m+1} \alpha)
	\end{align*}
	Note that due to the DFS structure, we do not need to check for the configuration having a non-maximal id as in the unary case.
	
	\item Finally the formula $\emph{Repeat}$ propagates all final states~$q$ to the successor tree:
	\begin{align*}
	\emph{Repeat}= \LTLsquare (&(\bigvee \limits_{j=1}^{h} \neg r_j)  \wedge \emph{Next} 
	\wedge (\bigwedge \limits_{j=1}^{m} \iota_j \leftrightarrow \nexts{2m+1} \iota_j)\\
	& \rightarrow  (\bigvee \limits_{q \in Q_F}( q \rightarrow \nexts{2m+1} q))
	\end{align*}
	Again we need an additional formula that for example forces the atomic propositions to appear only in the designated node or to have only one state label in each inner-tree, and some more for other technical properties.  \qed
\end{itemize}

\end{document}